\newcommand{\Z}{{\mathbf{Z}}}
\newcommand{\fS}{{\mathcal{S}}}
\newcommand{\cS}{{\mathcal{S}}}
\newcommand{\lL}{{\mathbf L}}
\newcommand{\mM}{{\mathbf M}}
\newcommand{\C}{{\mathbb C}}
\newcommand{\hH}{{\mathbb H}}
\newtheorem{theorem}{Theorem}
\newtheorem{lemma}[theorem]{Lemma}
\newtheorem{remark}{Remark}
\def\BibTeX{{\rm B\kern-.05em{\sc i\kern-.025em b}\kern-.08em
    T\kern-.1667em\lower.7ex\hbox{E}\kern-.125emX}}
\begin{document}

\title{Bounds on and Constructions of Unit Time-Phase Signal Sets}

\author{Cunsheng Ding,\thanks{
C. Ding is with the Department of Computer Science and Engineering,
The Hong Kong University of Science and Technology, Clear Water Bay,
Kowloon, Hong Kong. Email: cding@ust.hk}
Keqin Feng,\thanks{K. Feng is with the Department of Mathematical Sciences, Tsinghua University,
 Beijing, 100084, P. R. China. Email: kfeng@math.tsinghua.edu.cn}
Rongquan Feng, \thanks{R. Feng is with the School of Mathematical
Sciences, Peking University, Beijing 100871, P. R. China. Email:
fengrq@math.pku.edu.cn} Aixian Zhang \thanks{A. Zhang is with the
School of Mathematical Sciences, Capital Normal University,
Beijing,100048, P. R. China. Email: zhangaixian1008@126.com}
}

\date{\today}
\maketitle

\begin{abstract}
Digital signals are complex-valued functions on $\Z_n$. Signal sets
with certain properties are required in various communication
systems. Traditional signal sets consider only the time distortion
during transmission. Recently, signal sets against both the time and
phase distortion have been studied, and are called {\em time-phase}
signal sets. Several constructions of time-phase signal sets are
available in the literature. There are a number of bounds on time
signal sets (also called codebooks). They are automatically bounds on time-phase signal sets,
but are bad bounds. The first objective of this paper is to develop better
bounds on time-phase signal sets from known bounds on time signal
sets. The second objective of this paper is to construct two series of
time-phase signal sets, one of which is optimal.
\end{abstract}

\begin{keywords}
Codebooks, digital signals, phase distortion, sequences, signal sets, time distortion, time-phase signal sets.
\end{keywords}

\section{Introduction}

Throuout this paper, let $n>1$ be an integer, and let
$\Z_n=\{0,1,\cdots, n-1\}$. We define $\hH_n=\C(\Z_n)$, the set of
all complex-valued functions on $\Z_n$, which is a Hilbert space
with the Hermitian product given by
$$
\langle \phi, \varphi \rangle =\sum_{t \in \Z_n} \phi(t) \overline{\varphi(t)}.
$$
The {\em norm} of $\phi \in \hH_n$ is defined by
$$
\| \phi \| = \sqrt{\langle \phi, \phi \rangle}.
$$

Digital signals are complex-valued functions on $\Z_n$. They are
also called sequences as the following mapping
$$
\phi \mapsto  (\phi(0), \phi(1), \cdots, \phi(n-1))
$$
transfers a function $\phi \in \hH_n$ into a sequence in $\C^n$. We
identify the function $\phi$ with its sequence. A subset $\fS
\subset \hH_n$ is called a {\em signal set}, and a {\em real signal
set} if every signal in $\fS$ is real-valued. Any $\phi \in \hH_n$
is called a {\em unit signal} if $\| \phi \|=1$. A subset $\fS$ is
said to be a unit signal set if every signal in $\fS$ is a unit
signal. Signal sets with certain properties are required in some
communication systems \cite{GHS,HCM,WG11}.

During the transmission process, a signal $\varphi$ might be distorted.
Two basic types of distortion are the {\em time shift}
$\varphi(t) \mapsto \lL_\tau \varphi(t)=\varphi(t+\tau)$ and the {\em phase shift}
$\varphi(t) \mapsto \mM_w \varphi(t)=e^{\frac{2\pi i}{n}wt} \varphi(t)$,
where $\tau, w \in \Z_n$. For certain applications, it is required that for every $\varphi \ne \phi \in \fS$,
\begin{eqnarray}\label{eqn-brequirment}
|\langle \phi, \mM_w \lL_\tau \varphi \rangle |  \ll 1.
\end{eqnarray}
In addition, signals are sometimes required to admit low peak-to-average power ratio,
i.e., for every $\phi \in \fS$ with $\| \phi \|=1$,
$$
\max\{|\phi(t)|: t \in \Z_n\} \ll 1.
$$
In view of the above requirements and that every signal can be normalized into a unit signal,
in this paper we consider only {\em unit signal sets} $\fS$, in which we have for every unit
signal $\phi$
$$
\frac{1}{\sqrt{n}} \le \max\{|\phi(t)|: t \in \Z_n\}.
$$

To measure the capability of anti-distortion of a signal set $\fS$ with respect
to the time and phase shift, Gurevich, Hadani and Sochen \cite{GHS} defined 
\begin{eqnarray}\label{eqb-correlation1}
\lambda_{\fS}=\max\{|\langle \phi, \mM_w \lL_\tau \varphi \rangle |:
                     \mbox{ either } \phi \ne \varphi \mbox{ or } (\tau, w) \ne (0,0) \}.
\end{eqnarray}

For convenience, we call $\fS$ an $(n, M, \lambda_\fS)$ {\em time-phase signal set},
where $M$ denotes the total number of signals in $\fS$. If we require that $\lambda_\fS<1$,
any  $(n, M, \lambda_\fS)$ time-phase signal set is an {\em ambiguity signal set} defined in
\cite{WG11}.

In some communication systems, only time distortion is considered. In this case, two correlation
measures are considered.  One is the maximum crosscorrelation amplitude $\nu_\cS$ of
an $(n, M)$ signal set defined by
\begin{eqnarray}\label{eqn-correlation2}
\nu_\cS = \max_{\phi, \varphi \in \fS \atop \phi \ne \varphi} |\langle \phi, \varphi \rangle |,
\end{eqnarray}
and the other is the maximum auto-and-cross correlation  amplitude $\theta_\cS$  of
an $(n, M)$ signal set defined by
\begin{eqnarray}\label{eqb-correlation3}
\theta_{\fS}=\max\{|\langle \phi, \lL_\tau \varphi \rangle |:
                     \mbox{ either } \phi \ne \varphi \mbox{ or } \tau \ne 0 \}.
\end{eqnarray}
If only time distortion is considered, we call $\fS$ an $(n, M, \nu_\cS)$ or $(n, M, \theta_\cS)$
{\it time signal set}. Time signal sets are also called {\em codebooks}.

By definition, we have clearly
\begin{eqnarray}\label{eqn-boundchain}
\lambda_\cS \ge \theta_\cS \ge \nu_\cS
\end{eqnarray}
for any $(n, M)$ signal set.

Hence, signal sets are classified into two types: time-phase signal sets and time signal sets.
Time signal sets have been studied for CDMA communications (see, for example, \cite{Allop,CCKS,CHS,Ding05,DYin,HSAP,HCM,LHS,Sarw98,SH03,Welch, HKchapter}). A number of
lower bounds on $\theta_\cS$ and $\nu_\cS$ were developed. They are automatically
lower bounds on $\lambda_\cS$ due to (\ref{eqn-boundchain}), but are bad lower bounds
for $\lambda_\cS$ as the correlation measure $\lambda_\cS$ is much stronger than $\theta_\cS$
and $\nu_\cS$.
The objectives of this paper are to derive better lower bounds on the parameters of
time-phase signal sets, and construct optimal and good optimal time-phase signal sets.

This paper is organized as follows. Section \ref{sec-ss} first establishes a one-way
bridge between time-phase signal sets and time signal sets, and then uses this bridge to
develop bounds on unit time-phase signal sets from known bounds on time signal sets.
Section \ref{sec-2ndgroup} first sets up another one-way bridge
between time-phase signal sets and time signal sets, and then employs this bridge to
develop more bounds on unit time-phase signal sets from known bounds on time signal sets.
Section \ref{sec-constructions} presents two series of unit time-phase signal sets, one 
of which is optimal.
Section \ref{sec-summary} summaries the main contributions of this paper and presents
some open problems.

\section{The first group of lower bounds on $(n, M, \lambda)$ time-phase signal sets}\label{sec-ss}

Two bounds on the parameters of $(n, M, \nu_\cS)$ time signal sets are described
in the following two lemmas.

\begin{lemma}\label{lem-Welch}
{\rm (Welch's bound \cite{Welch})} For any $(n, M, \nu)$ unit time
signal set  $\cS$ with $M  \geq n$, and for each integer $k \geq 1,$
$$
\sum_{\phi, \varphi \in
\cS} | \langle \phi, \varphi \rangle |^{2k} \geq  { n+k-1 \choose k
}^{-1} M^{2},
$$ so that
\begin{eqnarray}\label{eqn-welchb2}
\nu_\cS &\ge& \left(\frac{1}{M(M-1)} \sum_{\scriptstyle \phi, \varphi \in
\cS \atop \scriptstyle \phi \neq \varphi} |\langle \phi, \varphi
\rangle|^{2k} \right)^{1/2k} \nonumber \\
&\geq & \left(\frac{{ n+k-1 \choose k }
^{-1}M^{2}-M}{M(M-1) }\right)^{1/2k}  \nonumber \\
&=& \left(\frac{M-{ n+k-1 \choose k
}}{(M-1){ n+k-1 \choose k }}\right)^{1/2k} \triangleq  \widetilde{w_{k}},
\end{eqnarray}
and $\nu_\cS  = \widetilde{w_{k}} $ if and only if for all pairs
$(\phi, \varphi) \in \cS \times \cS $ with $ \phi \ne \varphi,
|\langle \phi, \varphi \rangle |=\widetilde{w_{k}}.$
\end{lemma}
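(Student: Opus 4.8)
The plan is to establish the moment inequality first, then derive the norm bound and the equality characterization as formal consequences. The engine is the classical Welch argument via the symmetric tensor power. For a unit signal $\phi \in \hH_n$, consider its $k$-fold tensor power $\phi^{\otimes k} \in \hH_n^{\otimes k}$, and note the key identity
$$
\langle \phi^{\otimes k}, \varphi^{\otimes k} \rangle = \langle \phi, \varphi \rangle^k,
$$
so that $|\langle \phi, \varphi \rangle|^{2k} = |\langle \phi^{\otimes k}, \varphi^{\otimes k}\rangle|^2$. Each $\phi^{\otimes k}$ actually lies in the symmetric subspace $\mathrm{Sym}^k(\hH_n) \subseteq \hH_n^{\otimes k}$, whose dimension is $\binom{n+k-1}{k}$; call this $D$. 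Then
$$
\sum_{\phi, \varphi \in \cS} |\langle \phi, \varphi \rangle|^{2k}
= \sum_{\phi, \varphi \in \cS} |\langle \phi^{\otimes k}, \varphi^{\otimes k}\rangle|^2
= \left\| \sum_{\phi \in \cS} \phi^{\otimes k} \otimes \overline{\phi^{\otimes k}} \right\|^2
= \mathrm{Tr}(G^2),
$$
where $G = \sum_{\phi \in \cS} \phi^{\otimes k}(\phi^{\otimes k})^*$ is the Gram-type operator on $\mathrm{Sym}^k(\hH_n)$.

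**Next I would** bound $\mathrm{Tr}(G^2)$ from below by Cauchy--Schwarz applied to the eigenvalues of $G$: since $G$ is a positive semidefinite operator on a space of dimension $D$, and $\mathrm{Tr}(G) = \sum_{\phi \in \cS} \|\phi^{\otimes k}\|^2 = \sum_{\phi \in \cS}\|\phi\|^{2k} = M$ (because each $\phi$ is a unit signal), we get $\mathrm{Tr}(G^2) \ge (\mathrm{Tr}(G))^2 / D = M^2 / D = \binom{n+k-1}{k}^{-1} M^2$. This is exactly the first displayed inequality. Then the chain (\ref{eqn-welchb2}) is pure bookkeeping: the first inequality there is just the fact that a maximum of nonnegative quantities is at least the average of those quantities, applied to the $M(M-1)$ off-diagonal terms $|\langle \phi, \varphi\rangle|^{2k}$; the second inequality substitutes the just-proved lower bound on $\sum_{\phi,\varphi}|\langle\phi,\varphi\rangle|^{2k}$ after peeling off the $M$ diagonal terms (each equal to $\|\phi\|^{4k} = 1$); and the last line is algebraic simplification, yielding $\widetilde{w_k}$.

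**For the equality characterization,** I would trace back through the two inequalities used. Equality in $\nu_\cS = \widetilde{w_k}$ forces equality in the averaging step, which happens precisely when all off-diagonal $|\langle \phi, \varphi\rangle|^{2k}$ are equal — hence all $|\langle\phi,\varphi\rangle|$ equal $\widetilde{w_k}$ — and also equality in the Cauchy--Schwarz step $\mathrm{Tr}(G^2) \ge (\mathrm{Tr} G)^2/D$, which requires $G$ to be a scalar multiple of the identity on $\mathrm{Sym}^k(\hH_n)$ (i.e.\ all $D$ eigenvalues equal). Conversely, if all pairwise $|\langle\phi,\varphi\rangle| = \widetilde{w_k}$, then summing gives $\sum_{\phi,\varphi}|\langle\phi,\varphi\rangle|^{2k} = M + M(M-1)\widetilde{w_k}^{2k}$, and substituting the definition of $\widetilde{w_k}$ one checks this equals $\binom{n+k-1}{k}^{-1}M^2$, so the moment bound is met with equality; plugging back into (\ref{eqn-welchb2}) collapses both inequalities to equalities. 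One subtlety worth a remark: strictly, equality in $\nu_\cS = \widetilde{w_k}$ only forces all off-diagonal inner products to have the \emph{same} modulus equal to the average, and separately forces the moment bound to be tight; the ``if and only if'' as stated packages the ``only if'' direction as the first of these conditions, so I would present the proof to make clear that the constant-modulus condition is what both implies and is implied by equality.

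**The main obstacle** is not any single hard step but getting the dimension count and the tensor identity stated cleanly: one must verify that $\phi^{\otimes k}$ genuinely sits in the symmetric subspace and that $\dim \mathrm{Sym}^k(\C^n) = \binom{n+k-1}{k}$, and one must be careful that the diagonal terms contribute exactly $M$ (using $\|\phi\| = 1$, which is where the \emph{unit} hypothesis and the $M \ge n$ hypothesis — ensuring the bound is non-vacuous — enter). The inequality itself is then a one-line Cauchy--Schwarz; the bulk of the write-up is the algebraic reduction in (\ref{eqn-welchb2}) and the careful forward-and-backward argument for the equality case.
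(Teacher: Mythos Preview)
The paper does not supply its own proof of this lemma: it is quoted as Welch's bound with a citation to \cite{Welch} and used thereafter as a black box. There is therefore nothing in the paper to compare your argument against.

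That said, your proposal is the standard and correct proof. The symmetric-tensor identity $\langle \phi^{\otimes k},\varphi^{\otimes k}\rangle=\langle\phi,\varphi\rangle^k$, the dimension count $\dim\mathrm{Sym}^k(\C^n)=\binom{n+k-1}{k}$, and the Cauchy--Schwarz step $\mathrm{Tr}(G^2)\ge(\mathrm{Tr}\,G)^2/D$ with $\mathrm{Tr}\,G=M$ together give the moment inequality; the chain in (\ref{eqn-welchb2}) is then exactly the bookkeeping you describe. One minor simplification for the equality case: the ``if'' direction is immediate from the definition of $\nu_\cS$ as a maximum (if every off-diagonal modulus equals $\widetilde{w_k}$ then so does the maximum), so you need not route it through the moment bound. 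Your remark about the $M\ge n$ hypothesis is slightly off --- that condition is not used in the proof of the moment inequality itself, only to guarantee that for $k=1$ the quantity $\widetilde{w_1}$ is real and the bound nontrivial --- but this does not affect correctness.
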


It was proved in \cite{SH03} that
no $(n,M, \nu)$ real time signal set  $\cS$ can meet the Welch bound
$\widetilde{w_{1}} $ of (\ref{eqn-welchb2}) if $M > n(n+1)/2$ and
no $(n,M,\nu)$ time signal set $\cS$ can meet the Welch bound
$\widetilde{w_{1}} $ of (\ref{eqn-welchb2}) if $M > n^2$.

The following was proved in \cite{Levens}

\begin{lemma}\label{lem-Levenstein}
For any $(n,M,\nu_\cS)$ real unit time signal set $\cS$ with $M > n(n+1)/2$,
\begin{eqnarray}\label{eqn-boundKL1}
\nu_\cS \ge \sqrt{\frac{3M-n^2-2n}{(n+2)(M-n)}}.
\end{eqnarray}

For any $(n,M,\nu_\cS)$ complex unit time signal set $\cS$ with $M > n^2$,
\begin{eqnarray}\label{eqn-boundKL2}
\nu_\cS \ge \sqrt{\frac{2M-n^2-n}{(n+1)(M-n)}}.
\end{eqnarray}
\end{lemma}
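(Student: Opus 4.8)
The plan is to establish both inequalities by one and the same Delsarte-type linear-programming argument applied to the $M$ unit vectors $\phi_{1},\dots,\phi_{M}$ (real for (\ref{eqn-boundKL1}), complex for (\ref{eqn-boundKL2})) that form $\cS$, using the single test polynomial $f(u)=u(u-\nu_\cS^{2})=u^{2}-\nu_\cS^{2}u$ evaluated at the squared correlations $u_{jk}=|\langle\phi_{j},\phi_{k}\rangle|^{2}$. Here $u_{jj}=1$ and $0\le u_{jk}\le\nu_\cS^{2}$ for $j\ne k$. We may also assume $\nu_\cS^{2}<\frac{2}{n+1}$ in the complex case and $\nu_\cS^{2}<\frac{3}{n+2}$ in the real case, for otherwise the claimed bound holds automatically: an easy check gives $\frac{2M-n^{2}-n}{(n+1)(M-n)}<\frac{2}{n+1}$ and $\frac{3M-n^{2}-2n}{(n+2)(M-n)}<\frac{3}{n+2}$ for all admissible $M$ (here $n>1$). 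Since $f(u)\le 0$ throughout $[0,\nu_\cS^{2}]$, we get $\sum_{j\ne k}f(u_{jk})\le 0$, hence
$$\sum_{j,k}f(u_{jk})\ \le\ M\,f(1)\ =\ M\,(1-\nu_\cS^{2}).$$

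For the matching lower estimate I would expand $f$ in the zonal (Jacobi) polynomials $R_{0}\equiv 1,\ R_{1},\ R_{2}$ attached to the ambient projective space, i.e.\ the polynomials in $u$ orthogonal for the distribution of $|\langle\phi,\psi\rangle|^{2}$ with $\phi,\psi$ uniform on the unit sphere of $\R^{n}$ (resp.\ $\C^{n}$). A direct computation gives $R_{1}(u)=nu-1$ and, in the complex case, $R_{2}(u)=u^{2}-\frac{4}{n+2}u+\frac{2}{(n+1)(n+2)}$ (with $\frac{6}{n+4}$ and $\frac{3}{(n+2)(n+4)}$ replacing these coefficients in the real case), so that
$$f=R_{2}+f_{1}R_{1}+f_{0}R_{0},\qquad f_{1}=\frac{1}{n}\left(\frac{4}{n+2}-\nu_\cS^{2}\right)\ge 0,\qquad f_{0}=\frac{2-(n+1)\nu_\cS^{2}}{n(n+1)}>0,$$
with the obvious real analogues; the sign of $f_{1}$ and the strict positivity of $f_{0}$ are exactly where the reductions above are used. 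What one needs is $\sum_{j,k}R_{1}(u_{jk})\ge 0$ and $\sum_{j,k}R_{2}(u_{jk})\ge 0$. The first is Welch's inequality with $k=1$ (Lemma~\ref{lem-Welch}), equivalently $\sum_{j,k}u_{jk}=\big\|\sum_{j}\phi_{j}\phi_{j}^{*}\big\|^{2}\ge\frac{1}{n}\big(\tr\sum_{j}\phi_{j}\phi_{j}^{*}\big)^{2}=M^{2}/n$. The second is the addition theorem for the projective space: $R_{2}$ is a positive multiple of the level-$2$ (complex), resp.\ degree-$4$-harmonic (real), zonal kernel, which is positive semidefinite, so $\sum_{j,k}R_{2}(u_{jk})$ is a nonnegative multiple of $\big\|\sum_{j}Y(\phi_{j})\big\|^{2}\ge 0$, where $Y$ lists an orthonormal basis of the relevant space of harmonics.

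Combining the two halves, $f_{0}M^{2}=\sum_{j,k}f_{0}R_{0}(u_{jk})\le\sum_{j,k}f(u_{jk})\le M(1-\nu_\cS^{2})$, hence $M\le(1-\nu_\cS^{2})/f_{0}$; in the complex case this reads $M\big(2-(n+1)\nu_\cS^{2}\big)\le n(n+1)(1-\nu_\cS^{2})$, and dividing by $(n+1)(M-n)>0$ and solving the resulting linear inequality for $\nu_\cS^{2}$ gives $\nu_\cS^{2}\ge\frac{2M-n^{2}-n}{(n+1)(M-n)}$, which is (\ref{eqn-boundKL2}); the real case is identical with $n+1$ replaced by $n+2$ and the real $R_{2}$, yielding (\ref{eqn-boundKL1}). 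The hypotheses $M>n^{2}$, resp.\ $M>n(n+1)/2$, are used only to make the conclusion nonvacuous: the bound coincides with the Welch bound $\widetilde{w_{1}}$ of (\ref{eqn-welchb2}) exactly at $M=n^{2}$, resp.\ $M=n(n+1)/2$, and strictly exceeds it for larger $M$.

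The only step carrying real content is the positivity $\sum_{j,k}R_{2}(u_{jk})\ge 0$; I would isolate it as a lemma. In the real case it reduces transparently to the classical positive-definiteness of the degree-$4$ Gegenbauer polynomial $Z_{4}$ on $S^{n-1}$, through the identity $R_{2}(t^{2})=\frac{(n-1)(n+1)}{(n+2)(n+4)}Z_{4}(t)$ together with the addition theorem $\sum_{j,k}Z_{4}(\langle\phi_{j},\phi_{k}\rangle)=h_{4}^{-1}\sum_{i}\big|\sum_{j}Y_{i}(\phi_{j})\big|^{2}\ge 0$, where $\{Y_{i}\}$ is an orthonormal basis of the degree-$4$ spherical harmonics and $h_{4}$ their number; the complex case is the analogue for the zonal functions of complex projective space. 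A route for the complex case that avoids the addition theorem is to write the Gram matrix $A=[\,u_{jk}\,]$ as $A=\frac{1}{n}J+A''$ with $J$ the all-ones matrix and $A''$ positive semidefinite, $\tr A''=M(n-1)/n$, $\rank A''\le n^{2}-1$, and then combine $\tr\big((A'')^{2}\big)\ge(\tr A'')^{2}/(n^{2}-1)$ with $\sum_{j,k}A''_{jk}\ge 0$ and $\nu_\cS^{2}<\frac{2}{n}$; I expect this variant to be the fussiest thing to make airtight, and it is less clean in the real case (the rank bound $\frac{n(n+1)}{2}-1$ interacts less favourably with the threshold $\frac{3}{n+2}$), which is why I would route both cases through the addition theorem.
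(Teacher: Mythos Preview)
The paper does not prove Lemma~\ref{lem-Levenstein}; it simply quotes the result from Levenstein's 1982 paper (reference [Levens]) and uses it as a black box. So there is no ``paper's own proof'' to compare against.

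Your argument is the standard Delsarte linear-programming proof of Levenstein's bound, carried out correctly. The test polynomial $f(u)=u(u-\nu_\cS^{2})$, its expansion in the zonal orthogonal polynomials $R_0,R_1,R_2$ of real/complex projective space, and the two inputs (Welch for the $R_1$ coefficient, the addition theorem for the $R_2$ coefficient) are exactly the ingredients of Levenstein's original method. Your reductions $\nu_\cS^{2}<\tfrac{2}{n+1}$ (complex) and $\nu_\cS^{2}<\tfrac{3}{n+2}$ (real) are sound, and in fact also guarantee $f_1\ge 0$ since $\tfrac{2}{n+1}<\tfrac{4}{n+2}$ and $\tfrac{3}{n+2}<\tfrac{6}{n+4}$ for $n\ge 1$. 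The computation of $f_0$ and the final algebra solving for $\nu_\cS^{2}$ check out in both cases. The only point deserving a line of justification in a finished write-up is that $R_2(t^{2})$ is a \emph{positive} multiple of the degree-$4$ Gegenbauer polynomial in the real case (and the analogous statement for $\mathbb{CP}^{n-1}$); you outline this correctly via orthogonality and matching leading coefficients, so there is no gap, just a detail to flesh out.

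Your alternative Gram-matrix route (writing $A=[\,|\langle\phi_j,\phi_k\rangle|^{2}\,]$ as the Gram matrix of the rank-one projections and splitting off the trace part) also works and is a legitimate way to avoid invoking the addition theorem explicitly; your assessment that it is cleaner to route both cases through the addition theorem is reasonable.
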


If $M<n(n+1)/2$ (respectively $M<n^2$), the Welch bound
$\widetilde{w_{1}}=\sqrt{\frac{M-n}{(M-1)n}}$ on real (respectively,
complex) time signal sets is better. However, the Levenstein bound of
(\ref{eqn-boundKL1}) on real time signal sets is tighter than Welch's
bound $\widetilde{w_{1}}$ if $M>n(n+1)/2$, and that the Levenstein
bound of (\ref{eqn-boundKL2}) on complex time signal sets is tighter than
Welch's bound $\widetilde{w_{1}}$ if $M>n^2$.

Welch's and Levenstein's lower bounds on $\nu$ for time signal sets
yield directly lower bounds on $\lambda$ for time-phase signal sets
according to (\ref{eqn-boundchain}). But they are very bad lower
bounds on $\lambda$ as the correlation measure $\lambda_\cS$ is much
stronger than $\nu_\cS$ and $\theta_\cS$. However, they can be employed
to derive better bounds on $\lambda_\cS$ for time-phase signal sets. This
is our task in this section.

\begin{lemma}\label{lem-nnew1}
For any pair of distinct signals $\phi$ and $\varphi$ in an $(n, M, \lambda)$ unit
time-phase signal set with $\lambda<1$, we have
$$
\phi \ne \mM_w \lL_{\tau} \varphi
$$
for any $w, \tau \in \Z_n$.
\end{lemma}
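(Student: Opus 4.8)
The plan is to argue by contradiction directly from the definition of $\lambda_\fS$ in (\ref{eqb-correlation1}), using nothing beyond the unit-norm hypothesis. So I would suppose the conclusion fails: there exist distinct signals $\phi, \varphi \in \fS$ and shift parameters $w, \tau \in \Z_n$ with $\phi = \mM_w \lL_\tau \varphi$ as functions on $\Z_n$.

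Next I would substitute this equality into the second argument of the Hermitian product, obtaining $\langle \phi, \mM_w \lL_\tau \varphi \rangle = \langle \phi, \phi \rangle = \| \phi \|^2 = 1$, where the last step uses that $\fS$ is a unit signal set. On the other hand, since $\phi \ne \varphi$, the pair $(\phi, \varphi)$ is an admissible choice in the maximum defining $\lambda_\fS$ for \emph{every} $(\tau, w)$ — in particular for the one at hand — so $|\langle \phi, \mM_w \lL_\tau \varphi \rangle| \le \lambda_\fS < 1$. Comparing the two displayed facts gives $1 < 1$, a contradiction, which completes the proof.

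There is essentially no obstacle here: the statement is a near-immediate reformulation of the strict inequality $\lambda_\fS < 1$. The only point requiring a bit of care is the bookkeeping in the first step — one must observe that ``$\phi = \mM_w \lL_\tau \varphi$'' is an equality of the two functions on $\Z_n$, which is precisely what licenses replacing $\mM_w \lL_\tau \varphi$ by $\phi$ inside $\langle \cdot, \cdot \rangle$; notably, one does not even need the (true) fact that the operators $\mM_w$ and $\lL_\tau$ act unitarily on $\hH_n$.
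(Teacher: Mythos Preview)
Your proof is correct and is essentially the same as the paper's own argument, which simply states that the conclusion follows directly from the assumption $\lambda<1$. You have merely spelled out in detail the contradiction that the paper leaves implicit.
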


\begin{proof}
The conclusion follows directly from the assumption that $\lambda<1$.
\end{proof}

Given an $(n, M, \lambda)$ unit time-phase signal set $\fS$ with $\lambda<1$, where 
$$
\fS=\{\phi_1, \phi_2, \cdots, \phi_M\},
$$
we  define an $(n, n^2M, \lambda)$ unit time signal set
\begin{eqnarray}\label{eqn-mainS}
\cS_{\fS}=\left\{ \phi_{j,w,\tau}: 1 \le j \le M, \ 0 \le w \le n-1, \ 0 \le \tau \le n-1  \right\},
\end{eqnarray}
where
\begin{eqnarray}\label{eqn-mainS2}
\phi_{j,w,\tau}(t)=e^{\frac{2\pi i}{n}wt}\phi_j(t+\tau).
\end{eqnarray}

\begin{lemma}\label{lem-nnew2}
Let $\phi_{j,w,\tau}$ be defined in (\ref{eqn-mainS2}).
Then each $\phi_{j,w,\tau}$ is a unit signal. In addition $\phi_{j,w,\tau}=\phi_{j',w',\tau'}$ if and only if
           $(j, w, \tau)=(j', w', \tau')$.
\end{lemma}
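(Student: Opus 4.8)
The plan is to prove the two assertions in turn, the first being routine and the second carrying all the content. For the norm, observe that $\phi_{j,w,\tau}$ is obtained from $\phi_j$ by multiplying pointwise by the unimodular factor $e^{\frac{2\pi i}{n}wt}$ and then applying the index shift $t\mapsto t+\tau$; neither step changes the sum $\sum_{t\in\Z_n}|f(t)|^2$ of squared moduli — the first because $|e^{\frac{2\pi i}{n}wt}|=1$, the second because $t\mapsto t+\tau$ permutes $\Z_n$. Hence $\|\phi_{j,w,\tau}\|^2=\sum_{t}|\phi_j(t+\tau)|^2=\|\phi_j\|^2=1$, so each $\phi_{j,w,\tau}$ is a unit signal. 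For the equivalence, the ``if'' direction is immediate, so everything reduces to showing that $\phi_{j,w,\tau}=\phi_{j',w',\tau'}$ forces $(j,w,\tau)=(j',w',\tau')$.

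To that end I would start from the equality $\phi_{j,w,\tau}(t)=\phi_{j',w',\tau'}(t)$ for all $t$, divide through by $e^{\frac{2\pi i}{n}wt}$, and substitute $t\mapsto t-\tau$ (a permutation of $\Z_n$) to push the shift onto $\phi_{j'}$. This yields
\[
\phi_j(t)=c\, e^{\frac{2\pi i}{n}(w'-w)t}\,\phi_{j'}\bigl(t+(\tau'-\tau)\bigr)\qquad(t\in\Z_n),
\]
i.e. $\phi_j=c\,\mM_{w'-w}\lL_{\tau'-\tau}\phi_{j'}$, where $c=e^{-\frac{2\pi i}{n}(w'-w)\tau}$ has modulus $1$ and the indices $w'-w,\tau'-\tau$ are read modulo $n$ (harmless, since $\mM_w$ and $\lL_\tau$ depend only on $w,\tau\bmod n$). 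Now take the Hermitian product of both sides with the unit signal $\mM_{w'-w}\lL_{\tau'-\tau}\phi_{j'}$ (unit by the norm statement just established): the right side gives $c\,\|\mM_{w'-w}\lL_{\tau'-\tau}\phi_{j'}\|^2=c$, so $|\langle\phi_j,\mM_{w'-w}\lL_{\tau'-\tau}\phi_{j'}\rangle|=1$. But if $j\neq j'$, or if $j=j'$ and $(\tau,w)\neq(\tau',w')$ (so that $(\tau'-\tau,\,w'-w)\neq(0,0)$), then the left side is one of the quantities appearing in the maximum~(\ref{eqb-correlation1}) defining $\lambda_\fS$, hence is at most $\lambda<1$ — a contradiction. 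Therefore $j=j'$ and $(w,\tau)=(w',\tau')$.

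The only delicate point is the operator bookkeeping: correctly tracking the unimodular constant $c$ created when the shift is pushed through the modulation, and checking that reducing $w'-w$ and $\tau'-\tau$ modulo $n$ keeps $\mM_{w'-w}\lL_{\tau'-\tau}$ of the admitted form. It is also worth noting that Lemma~\ref{lem-nnew1} as stated (which only asserts $\phi\neq\mM_w\lL_\tau\varphi$) does not immediately suffice here, because the equality we derive carries the scalar $c$; this is why I would argue directly from the definition of $\lambda_\fS$ through the inner-product computation above rather than quoting Lemma~\ref{lem-nnew1}. Everything else is elementary.
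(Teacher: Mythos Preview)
Your proof is correct and follows the same underlying idea as the paper, which disposes of the second assertion in one line by citing Lemma~\ref{lem-nnew1}. Your observation that the unimodular scalar $c$ (and the case $j=j'$) are not literally covered by that lemma as stated is well taken, and your direct appeal to the definition of $\lambda_\fS$ via the inner product is exactly how one would fill those small gaps---so your argument is the same in spirit but more carefully written out.
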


\begin{proof}
The first conclusion is straightforward, and the second follows from Lemma \ref{lem-nnew1}.
\end{proof}

\begin{theorem}\label{thm-mmain}
For any $(n, M, \lambda_\fS)$ unit time-phase signal set $\fS$ with $\lambda_\cS<1$,
the set $\cS_{\fS}$ defined in (\ref{eqn-mainS}) is an $(n, n^2M, \nu_{\cS_{\fS}})$
unit time signal set with $\nu_{\cS_{\fS}}=\lambda_\cS$.
\end{theorem}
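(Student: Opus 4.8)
The plan is to show two things: first, that $\cS_{\fS}$ really has $n^2 M$ distinct elements (so the parameter $n^2 M$ is correct), and second, that the maximum auto-and-cross correlation amplitude $\nu_{\cS_{\fS}}$ of $\cS_{\fS}$, computed over ordinary inner products between distinct elements, equals $\lambda_\fS$. The first point is already handled by Lemma \ref{lem-nnew2}, which uses $\lambda_\fS<1$ via Lemma \ref{lem-nnew1}; so I would simply cite it. For the second point, I would take two arbitrary distinct elements $\phi_{j,w,\tau}$ and $\phi_{j',w',\tau'}$ of $\cS_{\fS}$ and rewrite $\langle \phi_{j,w,\tau}, \phi_{j',w',\tau'}\rangle$ in terms of a time-phase inner product among the original signals $\phi_j,\phi_{j'}$.

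Concretely, I would expand
$$
\langle \phi_{j,w,\tau}, \phi_{j',w',\tau'}\rangle
=\sum_{t\in\Z_n} e^{\frac{2\pi i}{n}wt}\phi_j(t+\tau)\,\overline{e^{\frac{2\pi i}{n}w't}\phi_{j'}(t+\tau')}
=\sum_{t\in\Z_n} e^{\frac{2\pi i}{n}(w-w')t}\phi_j(t+\tau)\overline{\phi_{j'}(t+\tau')}.
$$
Then I would substitute $s=t+\tau$ (a bijection on $\Z_n$), so the sum becomes $\sum_{s} e^{\frac{2\pi i}{n}(w-w')(s-\tau)}\phi_j(s)\overline{\phi_{j'}(s+\tau'-\tau)}$, and pulling out the constant factor $e^{-\frac{2\pi i}{n}(w-w')\tau}$ of modulus $1$, this is $e^{-\frac{2\pi i}{n}(w-w')\tau}\sum_s e^{\frac{2\pi i}{n}(w-w')s}\phi_j(s)\overline{\phi_{j'}(s+(\tau'-\tau))} = e^{-\frac{2\pi i}{n}(w-w')\tau}\,\langle \mM_{w-w'}\phi_j,\ \lL_{\tau'-\tau}\phi_{j'}\rangle$ (reading indices mod $n$). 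Hence
$$
|\langle \phi_{j,w,\tau}, \phi_{j',w',\tau'}\rangle| = |\langle \mM_{w-w'}\phi_j,\ \lL_{\tau'-\tau}\phi_{j'}\rangle| = |\langle \phi_j,\ \mM_{w'-w}\lL_{\tau'-\tau}\phi_{j'}\rangle|,
$$
where the last step moves $\mM$ across the inner product at the cost of conjugating/negating its index, again only changing the quantity by a unimodular factor.

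From this identity the result follows by a case analysis on how $(j,w,\tau)$ and $(j',w',\tau')$ differ. If $j\ne j'$, the right-hand side is bounded by $\lambda_\fS$ by definition of $\lambda_\fS$ in (\ref{eqb-correlation1}); if $j=j'$ but $(w,\tau)\ne(w',\tau')$, then $(w'-w,\tau'-\tau)\ne(0,0)$, so again the right-hand side is $\le\lambda_\fS$ by (\ref{eqb-correlation1}) with $\phi=\varphi=\phi_j$. Thus $\nu_{\cS_{\fS}}\le\lambda_\fS$. For the reverse inequality, I would note that $\lambda_\fS$ is attained by some triple: there exist $\phi_j,\phi_{j'}\in\fS$ and $(\tau,w)$ with either $j\ne j'$ or $(\tau,w)\ne(0,0)$ such that $|\langle\phi_j,\mM_w\lL_\tau\phi_{j'}\rangle|=\lambda_\fS$; running the identity above backwards realizes this as $|\langle\phi_{j,0,0},\phi_{j',w',\tau'}\rangle|$ for the appropriate $(w',\tau')$, and the two elements are distinct in $\cS_{\fS}$ precisely because $j\ne j'$ or $(\tau,w)\ne(0,0)$ (here I would invoke Lemma \ref{lem-nnew1}/\ref{lem-nnew2} to ensure $j\ne j'$ with equal shifts still gives distinct signals, which is automatic, and that $j=j'$ with nonzero shift gives a genuinely different signal). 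Hence $\nu_{\cS_{\fS}}\ge\lambda_\fS$, and combining the two gives equality.

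The only mildly delicate point — and the one I would write most carefully — is the bookkeeping of the phase and shift indices modulo $n$ when commuting $\mM$ past $\lL$ and across the Hermitian product, making sure that every manipulation changes the inner product only by a factor of modulus $1$ so that absolute values are preserved, and that the "either $\ne$ or $\ne$" condition in (\ref{eqb-correlation1}) lines up exactly with the "distinct elements of $\cS_{\fS}$" condition via Lemma \ref{lem-nnew2}. There is no real analytic obstacle here; the content is entirely in setting up the change of variables and tracking indices, so I expect the proof to be short.
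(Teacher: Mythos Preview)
Your proposal is correct and follows essentially the same approach as the paper: both invoke Lemma \ref{lem-nnew2} for the cardinality, then expand $\langle \phi_{j,w,\tau}, \phi_{j',w',\tau'}\rangle$, perform the change of variable $t\mapsto t+\tau$ (equivalently $s=t+\tau$), and reduce the absolute value to $|\langle \phi_j, \mM_{(w'-w)\bmod n}\lL_{(\tau'-\tau)\bmod n}\phi_{j'}\rangle|$. Your explicit two-sided inequality argument is slightly more detailed than the paper's single sentence about the differences $(w'-w)$ and $(\tau'-\tau)$ ranging over all of $\Z_n$, but the substance is identical.
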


\begin{proof}
It follows from Lemma \ref{lem-nnew2} that $\left|\cS_{\fS}\right|=n^2M$.
By definition,
\begin{eqnarray*}
|\langle \phi_{j,w,\tau}, \phi_{j',w',\tau'} \rangle|
&=& \left| \sum_{t=0}^{n-1} e^{\frac{2\pi i}{n}wt}\phi_j(t+\tau)
                            e^{-\frac{2\pi i}{n}w't}\overline{\phi_{j'}(t+\tau')}   \right| \\
&=& \left| \sum_{t=0}^{n-1} \phi_j(t+\tau)
                            e^{-\frac{2\pi i}{n}(w'-w)t}\overline{\phi_{j'}(t+\tau')}   \right| \\
&=& \left| \sum_{t=0}^{n-1} \phi_j(t)
                            e^{-\frac{2\pi i}{n}(w'-w)(t-\tau)}\overline{\phi_{j'}(t+\tau'-\tau)}   \right| \\
&=& \left| \sum_{t=0}^{n-1} \phi_j(t)
           e^{-\frac{2\pi i}{n}(w'-w)t}\overline{\phi_{j'}(t+\tau'-\tau)}   \right| \\
&=& \left| \sum_{t=0}^{n-1} \phi_j(t)
           \overline{e^{\frac{2\pi i}{n}(w'-w)t} \phi_{j'}(t+\tau'-\tau)}   \right| \\
&=& \left| \langle \phi_j, \mM_{(w'-w) \bmod{n}} \lL_{(\tau'-\tau) \bmod{n}} \phi_{j'} \rangle   \right|.
\end{eqnarray*}

Note that $(w'-w) \bmod{n}$ ranges over all elements in $\Z_n$ when both $w$ and $w'$ run
over all elements in $\Z_n$. The same is true for $(\tau'-\tau) \bmod{n}$. Hence, the
signal set $\cS_{\fS}$ has maximum crosscorrelation amplitude $\nu_{\fS_\fS}=\lambda_\fS$.
\end{proof}

\begin{remark}
Theorem \ref{thm-mmain} is one of the main contributions of this paper.
It serves as a one-way bridge with which bounds on time signal sets can be
employed to derive better lower bounds on time-phase signal sets.
\end{remark}

\begin{theorem}\label{thm-Welch-DFF}
For any $(n, M, \lambda_\fS)$ unit time-phase signal set $\fS$ with
$\lambda_\fS<1$ and each integer $k \geq 1,$ we have
\begin{eqnarray}\label{eqn-welchb2-DFF}
\lambda_\fS \ge w_{k}  \triangleq \left(\frac{ n^{2} M-{ n+k-1 \choose k
}}{(n^{2}M-1){ n+k-1 \choose k }}\right)^{1/2k}.
\end{eqnarray}
\end{theorem}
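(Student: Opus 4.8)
The plan is to route everything through the one-way bridge of Theorem~\ref{thm-mmain}. Starting from the given $(n, M, \lambda_\fS)$ unit time-phase signal set $\fS$ with $\lambda_\fS < 1$, Theorem~\ref{thm-mmain} produces the unit \emph{time} signal set $\cS_\fS$ of (\ref{eqn-mainS}), which has exactly $n^2 M$ signals and maximum crosscorrelation amplitude $\nu_{\cS_\fS} = \lambda_\fS$. The idea is simply to feed $\cS_\fS$ into Welch's bound and read off the resulting inequality for $\lambda_\fS$.

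Concretely, I would first note that $\cS_\fS$ satisfies the size hypothesis of Lemma~\ref{lem-Welch}: since $n > 1$ and $M \ge 1$, we have $|\cS_\fS| = n^2 M \ge n^2 > n$, so Welch's bound applies with the parameter ``$M$'' of that lemma taken to be $n^2 M$. Applying (\ref{eqn-welchb2}) to $\cS_\fS$ for the fixed integer $k \ge 1$ then gives
$$
\nu_{\cS_\fS} \;\ge\; \left(\frac{n^2 M - \binom{n+k-1}{k}}{(n^2 M - 1)\binom{n+k-1}{k}}\right)^{1/2k} = w_k .
$$
Since Theorem~\ref{thm-mmain} guarantees $\nu_{\cS_\fS} = \lambda_\fS$, substituting yields $\lambda_\fS \ge w_k$, which is exactly (\ref{eqn-welchb2-DFF}).

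There is essentially no analytic obstacle here; the content has already been absorbed into Theorem~\ref{thm-mmain}, and this theorem is just the first instantiation of the ``bridge'' philosophy announced in the remark following it. The only points that need a word of care are bookkeeping ones: verifying that the cardinality of $\cS_\fS$ is genuinely $n^2 M$ (which is Lemma~\ref{lem-nnew2}, itself relying on $\lambda_\fS < 1$ via Lemma~\ref{lem-nnew1}), and checking the hypothesis $n^2 M \ge n$ of Welch's bound so that the lemma is legitimately applicable. If one wished, one could also record the equality condition inherited from Lemma~\ref{lem-Welch}: $\lambda_\fS = w_k$ holds if and only if $|\langle \phi, \varphi \rangle| = w_k$ for every pair of distinct signals $\phi \ne \varphi$ in $\cS_\fS$, i.e.\ exactly when all nontrivial time-phase shift correlations among the $\phi_j$ have constant magnitude $w_k$.
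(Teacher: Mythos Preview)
Your proposal is correct and follows exactly the paper's approach: the paper's proof is the one-liner ``The desired conclusion follows from Lemma~\ref{lem-Welch} and Theorem~\ref{thm-mmain},'' and you have simply spelled out the substitution $M \mapsto n^2 M$ in Welch's bound together with the bookkeeping that makes it legitimate. Your added verification of the hypothesis $n^2 M \ge n$ and the equality condition are welcome details that the paper leaves implicit.
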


\begin{proof}
The desired conclusion follows from Lemma \ref{lem-Welch} and Theorem \ref{thm-mmain}.
\end{proof}

\begin{theorem}\label{thm-LDFF2}
For any $(n, M, \lambda_\fS)$ unit time-phase signal set $\fS$ with $\lambda<1$ and $M>1$,
\begin{eqnarray}\label{eqn-LDFF2}
\lambda_\fS \ge \sqrt{\frac{2nM-n-1}{(n+1)(nM-1)}}.
\end{eqnarray}
\end{theorem}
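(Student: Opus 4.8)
The plan is to route everything through the one-way bridge of Theorem~\ref{thm-mmain} and then apply the complex Levenstein bound from Lemma~\ref{lem-Levenstein} to the enlarged time signal set $\cS_\fS$. First I would observe that since $\lambda_\fS<1$, Theorem~\ref{thm-mmain} (together with Lemmas~\ref{lem-nnew1} and \ref{lem-nnew2}) guarantees that $\cS_\fS$, as defined in (\ref{eqn-mainS})--(\ref{eqn-mainS2}), is a genuine complex unit time signal set of cardinality \emph{exactly} $n^2M$, with maximum crosscorrelation amplitude $\nu_{\cS_\fS}=\lambda_\fS$. This is the only place where the hypothesis $\lambda_\fS<1$ is used.

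Next I would verify that $\cS_\fS$ satisfies the hypothesis of the complex Levenstein bound (\ref{eqn-boundKL2}), namely that its number of signals exceeds $n^2$. Since $|\cS_\fS|=n^2M$, this requires $n^2M>n^2$, i.e.\ $M>1$, which is exactly the assumption of the theorem. Applying (\ref{eqn-boundKL2}) with its parameter $M$ replaced by $n^2M$ then yields
$$
\lambda_\fS=\nu_{\cS_\fS}\ge \sqrt{\frac{2n^2M-n^2-n}{(n+1)(n^2M-n)}}.
$$

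Finally I would simplify the right-hand side: factoring $n$ from the numerator, $2n^2M-n^2-n=n(2nM-n-1)$, and from the denominator, $(n+1)(n^2M-n)=n(n+1)(nM-1)$, and cancelling the common factor $n$ produces precisely (\ref{eqn-LDFF2}). The argument is entirely routine once Theorem~\ref{thm-mmain} is in hand; the only points needing care are the bookkeeping that $|\cS_\fS|=n^2M$ (so that $M>1$, and not some weaker condition, is the correct hypothesis) and the elementary algebraic reduction. I do not anticipate any genuine obstacle: the mathematical content of the statement is carried entirely by Theorem~\ref{thm-mmain} and the cited Levenstein bound.
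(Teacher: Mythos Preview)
Your proposal is correct and follows exactly the approach of the paper: the paper's proof is the single sentence ``The desired conclusion follows from Lemma~\ref{lem-Levenstein} and Theorem~\ref{thm-mmain},'' and you have spelled out precisely this derivation, including the verification that $|\cS_\fS|=n^2M>n^2$ under $M>1$ and the algebraic simplification.
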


\begin{proof}
The desired conclusion follows from Lemma \ref{lem-Levenstein} and Theorem \ref{thm-mmain}.
\end{proof}

\begin{remark}
For any $(n,M, \lambda)$ unit time-phase signal set $\fS$ with $M>n^2$, the Levenstein bound
gives automatically the following bound:
$$
\lambda_\cS \ge \sqrt{\frac{2M-n^2-n}{(n+1)(M-n)}}.
$$
However, it can be checked that this lower bound is much smaller than the lower
bound of (\ref{eqn-LDFF2}) when $M>n^2$. This shows that the new bound of
(\ref{eqn-LDFF2}) is indeed a big improvement over the original Levenstein bound
when it is used as a bound for unit time-phase signal sets.
This comment also applies to the new bound of Theorem \ref{thm-Welch-DFF}.
\end{remark}

\begin{remark}
It is easy to verify that the bounds $w_{1}$ of
(\ref{eqn-welchb2-DFF}) and (\ref{eqn-LDFF2}) are the same when
$M=1$, and the latter is superior when $M>1$. So the bound of
(\ref{eqn-welchb2-DFF}) is useful only for the case that $M=1$. The
purpose of presenting Theorem \ref{thm-Welch-DFF} here is to show
that the Levenstein bound yields better bound on unit time-phase
signal sets under the framework of this section. We will construct
some series of unit time-phase signal sets in Section $IV$ and then
discuss the tightness of these bounds in Section $V$.
\end{remark}

\section{The second group of bounds on $(n, M, \lambda)$ time-phase signal sets}\label{sec-2ndgroup}

In this section, we further derive bounds on time-phase unit signal sets from
known bounds on time unit signal sets.

Given an $(n, M, \lambda_\fS)$  time-phase unit signal set $\fS$ with $\lambda_\fS<1$,
where
$$
\fS=\{\phi_1, \phi_2, \cdots, \phi_M\},
$$
we now define an $(n, nM)$ unit time signal set
\begin{eqnarray}\label{eqn-mainS211}
\bar{\cS}_{\fS}=\left\{ \phi_{j,w}: 1 \le j \le M, \ 0 \le w \le n-1  \right\},
\end{eqnarray}
where
\begin{eqnarray}\label{eqn-mainS22}
\phi_{j,w}(t)=e^{\frac{2\pi i}{n}wt}\phi_j(t).
\end{eqnarray}

\begin{lemma}\label{lem-nnew22}
Let $\phi_{j,w}$ be defined in (\ref{eqn-mainS22}).
Then each $\phi_{j,w}$ is a unit signal.
In addition,  $\phi_{j,w}=\phi_{j',w'}$ if and only if
           $(j, w)=(j', w')$.
\end{lemma}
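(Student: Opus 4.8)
The plan is to prove Lemma~\ref{lem-nnew22} in two parts, mirroring the structure of Lemma~\ref{lem-nnew2}. First I would verify that each $\phi_{j,w}$ is a unit signal. This is immediate: since $|e^{\frac{2\pi i}{n}wt}|=1$ for every $t\in\Z_n$, we have $|\phi_{j,w}(t)|=|\phi_j(t)|$ for all $t$, so $\|\phi_{j,w}\|^2=\sum_{t\in\Z_n}|\phi_{j,w}(t)|^2=\sum_{t\in\Z_n}|\phi_j(t)|^2=\|\phi_j\|^2=1$, using that $\phi_j\in\fS$ is a unit signal.

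For the second part, the ``if'' direction is trivial: if $(j,w)=(j',w')$ then obviously $\phi_{j,w}=\phi_{j',w'}$. For the ``only if'' direction, suppose $\phi_{j,w}=\phi_{j',w'}$, i.e. $e^{\frac{2\pi i}{n}wt}\phi_j(t)=e^{\frac{2\pi i}{n}w't}\phi_{j'}(t)$ for all $t\in\Z_n$. Observe that $\phi_{j,w}=\mM_w\phi_j=\mM_w\lL_0\phi_j$ in the notation of the paper, so the hypothesis reads $\mM_w\lL_0\phi_j=\mM_{w'}\lL_0\phi_{j'}$, hence $\phi_{j'}=\mM_{(w-w')\bmod n}\lL_0\phi_j$. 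By Lemma~\ref{lem-nnew1}, since $\lambda_\fS<1$, distinct signals in $\fS$ cannot be related by a time-phase shift $\mM_w\lL_\tau$; therefore we must have $j=j'$. Once $j=j'$, the identity becomes $e^{\frac{2\pi i}{n}wt}\phi_j(t)=e^{\frac{2\pi i}{n}w't}\phi_j(t)$ for all $t$, so $(e^{\frac{2\pi i}{n}wt}-e^{\frac{2\pi i}{n}w't})\phi_j(t)=0$ for all $t$. Since $\|\phi_j\|=1$, there exists some $t_0\in\Z_n$ with $\phi_j(t_0)\ne 0$; in fact $|\phi_j(t_0)|\ge 1/\sqrt n$. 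For such $t_0$ we get $e^{\frac{2\pi i}{n}wt_0}=e^{\frac{2\pi i}{n}w't_0}$.

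The one point that needs a little care — and the closest thing to an obstacle — is concluding $w=w'$ from $e^{\frac{2\pi i}{n}wt_0}=e^{\frac{2\pi i}{n}w't_0}$ for a \emph{single} value $t_0$, which in general only gives $(w-w')t_0\equiv 0\pmod n$ and need not force $w=w'$ when $\gcd(t_0,n)>1$. To handle this cleanly, instead of picking one $t_0$ I would use the fact that the hypothesis $\phi_{j,w}=\phi_{j,w'}$ is equivalent to $\lambda_\fS<1$ forcing $\mM_{(w-w')\bmod n}\lL_0\phi_j=\phi_j$; by Lemma~\ref{lem-nnew1} applied with $\phi=\varphi=\phi_j$ this is impossible unless $(\tau,w)=(0,0)$ in the defining condition, i.e. unless $(w-w')\bmod n=0$, giving $w=w'$ directly. (Indeed Lemma~\ref{lem-nnew1}'s hypothesis ``$\phi\ne\varphi$'' can be dropped here: the definition \eqref{eqb-correlation1} of $\lambda_\fS$ already rules out $\mM_w\lL_\tau\varphi=\varphi$ with $(\tau,w)\ne(0,0)$, since that would force $|\langle\varphi,\mM_w\lL_\tau\varphi\rangle|=\|\varphi\|^2=1=\lambda_\fS$, contradicting $\lambda_\fS<1$.) This finishes the argument, and in fact shows the second part is, like Lemma~\ref{lem-nnew2}, essentially a restatement of the $\lambda_\fS<1$ hypothesis.
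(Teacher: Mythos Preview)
Your proof is correct and follows the same approach as the paper, which simply records that the first claim is straightforward and the second follows from Lemma~\ref{lem-nnew1}. Your write-up is in fact more careful than the paper's one-line appeal to Lemma~\ref{lem-nnew1}: you correctly note that Lemma~\ref{lem-nnew1} is stated only for \emph{distinct} $\phi,\varphi$, and you supply the missing (but easy) observation that $\mM_{w}\lL_0\phi_j=\phi_j$ with $w\ne 0$ would already force $\lambda_\fS\ge 1$ by the very definition~\eqref{eqb-correlation1}.
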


\begin{proof}
The first conclusion is straightforward, and the second one follows from
Lemma \ref{lem-nnew1}.
\end{proof}

\begin{theorem}\label{thm-mmain2}
For any $(n, M, \lambda_\fS)$ unit time-phase signal set $\fS$ with $\lambda_\fS<1$,
the set $\bar{\cS}_{\fS}$ defined in (\ref{eqn-mainS211}) is an $(n, nM, \theta_{\bar{\cS}_{\fS}})$
unit time signal set with $\theta_{\bar{\cS}_{\fS}}=\lambda_\cS$.
\end{theorem}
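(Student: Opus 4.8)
The plan is to follow the template of the proof of Theorem \ref{thm-mmain}, replacing the inner product $\langle\cdot,\cdot\rangle$ by the time-shifted inner product $\langle\cdot,\lL_\tau\cdot\rangle$ and replacing the ambient index set $\{(j,w,\tau)\}$ by $\{(j,w)\}$. First I would record that $|\bar{\cS}_{\fS}|=nM$, which is immediate from Lemma \ref{lem-nnew22}. The heart of the argument is to compute, for arbitrary indices $j,j'\in\{1,\dots,M\}$, $w,w'\in\Z_n$ and $\tau\in\Z_n$, the quantity $|\langle \phi_{j,w},\lL_\tau\phi_{j',w'}\rangle|$ and to identify it as a time-phase correlation of the original set $\fS$.

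Carrying out this computation with (\ref{eqn-mainS22}):
\[
\langle \phi_{j,w},\lL_\tau\phi_{j',w'}\rangle
=\sum_{t=0}^{n-1} e^{\frac{2\pi i}{n}wt}\phi_j(t)\,\overline{e^{\frac{2\pi i}{n}w'(t+\tau)}\phi_{j'}(t+\tau)}
=e^{-\frac{2\pi i}{n}w'\tau}\sum_{t=0}^{n-1}\phi_j(t)\,\overline{e^{\frac{2\pi i}{n}(w'-w)t}\phi_{j'}(t+\tau)}.
\]
Since the leading factor $e^{-\frac{2\pi i}{n}w'\tau}$ has modulus $1$, this gives
\[
\bigl|\langle \phi_{j,w},\lL_\tau\phi_{j',w'}\rangle\bigr|
=\bigl|\langle \phi_j,\mM_{(w'-w)\bmod n}\lL_\tau\phi_{j'}\rangle\bigr|.
\]
(Note that, unlike in Theorem \ref{thm-mmain}, the residual unit-modulus phase here arises from the shift inside the inner product rather than from a change of variable; it is discarded in exactly the same harmless way.)

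With this identity in hand, I would prove $\theta_{\bar{\cS}_{\fS}}=\lambda_\fS$ by two matching inclusions between the sets of admissible pairs. For $\theta_{\bar{\cS}_{\fS}}\le\lambda_\fS$: if a pair $(\phi_{j,w},\lL_\tau\phi_{j',w'})$ contributes to $\theta_{\bar{\cS}_{\fS}}$, then either $\phi_{j,w}\ne\phi_{j',w'}$ or $\tau\ne0$; by Lemma \ref{lem-nnew22} the first alternative forces $(j,w)\ne(j',w')$, and splitting into the subcases $j\ne j'$, then $j=j'$ with $w\ne w'$, then $j=j'$, $w=w'$, $\tau\ne0$, one verifies in each subcase that the pair $\bigl(\phi_j,\mM_{(w'-w)\bmod n}\lL_\tau\phi_{j'}\bigr)$ satisfies the admissibility condition for $\lambda_\fS$ (either $\phi_j\ne\phi_{j'}$, or $(\tau,(w'-w)\bmod n)\ne(0,0)$), so its correlation modulus is $\le\lambda_\fS$. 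For $\lambda_\fS\le\theta_{\bar{\cS}_{\fS}}$: given any admissible pair $(\phi_j,\mM_w\lL_\tau\phi_{j'})$ for $\lambda_\fS$, the pair $(\phi_{j,0},\lL_\tau\phi_{j',w})$ has correlation modulus equal to $|\langle\phi_j,\mM_w\lL_\tau\phi_{j'}\rangle|$ by the identity above, and a symmetric case check (again invoking Lemma \ref{lem-nnew22}) shows this pair contributes to $\theta_{\bar{\cS}_{\fS}}$. Combining the two inequalities yields $\theta_{\bar{\cS}_{\fS}}=\lambda_\fS$. The computation itself is routine; the only step requiring care is the bookkeeping of the ``either $\ldots$ or $\ldots$'' conditions so that the correspondence between contributing pairs is genuinely two-way, in particular the boundary subcase $j=j'$, $w=w'$, $\tau\ne0$, where $w'-w\equiv0$ and one must rely on $\tau\ne0$ for admissibility. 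I do not expect any genuine obstacle beyond this bookkeeping.
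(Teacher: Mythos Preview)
Your proposal is correct and follows essentially the same approach as the paper: both cite Lemma~\ref{lem-nnew22} for the cardinality, then reduce $|\langle\phi_{j,w},\lL_\tau\phi_{j',w'}\rangle|$ to $|\langle\phi_j,\mM_{(w'-w)\bmod n}\lL_\tau\phi_{j'}\rangle|$ and conclude equality of the two maxima. Your version is in fact more scrupulous than the paper's, which simply notes that $(w'-w)\bmod n$ ranges over all of $\Z_n$ and declares $\theta_{\bar{\cS}_{\fS}}=\lambda_\fS$ without spelling out the two-way admissibility check you perform.
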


\begin{proof}
It follows from Lemma \ref{lem-nnew22} that $\left|\bar{\cS}_{\fS}\right|=nM$.
By definition,
\begin{eqnarray*}
|\langle \phi_{j,w}, \lL_\tau\phi_{j',w'} \rangle|
&=& \left| \sum_{t=0}^{n-1} e^{\frac{2\pi i}{n}wt}\phi_j(t)
                            e^{-\frac{2\pi i}{n}w'(t+\tau)}\overline{\phi_{j'}(t+\tau)}   \right| \\
&=& \left| \sum_{t=0}^{n-1} \phi_j(t)
                            e^{-\frac{2\pi i}{n}(w'-w)t}\overline{\phi_{j'}(t+\tau)}   \right| \\
&=& \left| \sum_{t=0}^{n-1} \phi_j(t)
           \overline{e^{\frac{2\pi i}{n}(w'-w)t} \phi_{j'}(t+\tau)}   \right| \\
&=& \left| \langle \phi_j, \mM_{(w'-w) \bmod{n}} \lL_{\tau} \phi_{j'} \rangle   \right|.
\end{eqnarray*}

Note that $(w'-w) \bmod{n}$ ranges over all elements in $\Z_n$ when both $w$ and $w'$ run
over all elements in $\Z_n$.  Hence, for the
signal set $\bar{\cS}_{\fS}$ we have  $\theta_{\bar{\cS}_{\fS}} = \lambda_\fS$.
\end{proof}

\begin{remark}
Theorem \ref{thm-mmain2} is another main contribution of this paper. It
serves as a one-way bridge with which bounds on time signal sets can be
employed to derive better lower bounds on time-phase signal sets.
\end{remark}

The following bounds on unit time signal sets are due to Levenstein \cite{Levens,HKchapter},
and are linear programming bounds.
They work automatically as bounds for unit time-phase signal sets, but are bad ones
because the bounds of Theorem \ref{thm-complex} are much better.

\begin{lemma}\label{lem-complex}
Let $\cS \subset \hH_n$ be any $(n, M, \theta)$ unit time signal set. Then
\begin{eqnarray*}
M \le \left\{
\begin{array}{ll}
\frac{1-\theta^2}{1-n\theta^2}, \ &\mbox{ if }  0 < \theta^2 \le \frac{1}{n+1}  \\
\frac{(n+1)(1-\theta^2)}{2-(n+1)\theta^2}, \ &\mbox{ if }  \frac{1}{n+1} < \theta^2 \le \frac{2}{n+2}  \\
\frac{n(n+1)(n+2)(1-\theta^2)^2}{(n+1)(n+2)\theta^4 -4(n+1)\theta^2 +2 },
                   \ &\mbox{ if }  \frac{2}{n+2} < \theta^2 \le \frac{2(n+2)+\sqrt{2(n+1)(n+2)}}{(n+2)(n+3)}  \\
\frac{n(n+1)(n+2)[(n+3)\theta^2-2](1-\theta^2)}{12(n+2)\theta^2 -2(n+2)(n+3)\theta^4 -12 },
       \ & \mbox{ if }  \frac{2(n+2)+\sqrt{2(n+1)(n+2)}}{(n+2)(n+3)} \le \theta^2 \le \frac{3(n+3)+\sqrt{3(n+3)(n+1)}}{(n+3)(n+4)}.
\end{array}
\right.
\end{eqnarray*}
\end{lemma}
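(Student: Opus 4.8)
The plan is to reduce Lemma~\ref{lem-complex} to Levenstein's linear programming bounds for ordinary codebooks by \emph{unfolding the signal set under time shifts}, and then to indicate how the four displayed estimates arise from the four lowest-degree Delsarte polynomials.

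First I would attach to an $(n,M,\theta)$ unit time signal set $\cS=\{\phi_1,\dots,\phi_M\}$ the unfolded family
$$
\cS'=\left\{\lL_\tau\phi_j:\ 1\le j\le M,\ 0\le\tau\le n-1\right\}\subset\hH_n .
$$
Since each $\lL_\tau$ acts unitarily on $\hH_n$ and $\lL_\tau^{*}\lL_{\tau'}=\lL_{(\tau'-\tau)\bmod n}$, every member of $\cS'$ is a unit signal and $\langle\lL_\tau\phi_j,\lL_{\tau'}\phi_{j'}\rangle=\langle\phi_j,\lL_{(\tau'-\tau)\bmod n}\phi_{j'}\rangle$ for all $j,j',\tau,\tau'$. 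Inspection of the four range conditions (using $n>1$) gives $\theta^2<1$, hence $\theta<1$; so $\lL_\tau\phi_j=\lL_{\tau'}\phi_{j'}$ with $(j,\tau)\neq(j',\tau')$ would force $|\langle\phi_j,\lL_s\phi_{j'}\rangle|=1$ for some pair with either $j\neq j'$ or $s\neq 0$, contradicting $\theta<1$. Thus $|\cS'|=nM$, and by the displayed identity the inner product of any two distinct members of $\cS'$ is one of the quantities that the definition of $\theta_\cS$ bounds by $\theta$ (a nonzero-shift autocorrelation when $j=j'$, a cross-correlation when $j\neq j'$), so $\nu_{\cS'}\le\theta$; conversely, pairing the member realizing $\theta_\cS$ with $\tau=0$ gives $\nu_{\cS'}\ge\theta$. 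Hence $\cS'$ is an $(n,nM,\nu_{\cS'})$ unit codebook with $\nu_{\cS'}=\theta$ exactly.

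It then suffices to invoke Levenstein's linear programming upper bounds on the size $N$ of a unit codebook in $\C^n$ with maximum cross-correlation amplitude $\nu$, applied with $N=nM$ and $\nu=\theta$: the four inequalities of the lemma are precisely the first four of these after dividing through by $n$. For a self-contained argument I would instead run the Delsarte method directly on $\cS'$: writing $G_k$ for the zonal kernels on the complex sphere (the Jacobi $P_k^{(n-2,0)}$-type polynomials in the variable $u=|\langle\cdot,\cdot\rangle|^2$, for which $\bigl[G_k(|\langle\psi,\psi'\rangle|^2)\bigr]_{\psi,\psi'\in\cS'}$ is positive semidefinite), one picks for each of the four intervals Levenstein's optimal polynomial $f_m(u)$ of degree $m\in\{1,2,3,4\}$ that is nonpositive on $[0,\theta^2]$ and has nonnegative $G_k$-coefficients with positive constant term $c_0$; then
$$
c_0N^{2}\ \le\ \sum_{\psi,\psi'\in\cS'}f_m\bigl(|\langle\psi,\psi'\rangle|^2\bigr)\ \le\ Nf_m(1)
$$
yields $nM=N\le f_m(1)/c_0$, which simplifies to the stated right-hand side on the $m$-th interval. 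The main obstacle is this last ingredient for $m=3,4$: exhibiting the degree-three and degree-four Levenstein polynomials explicitly, verifying the sign and coefficient conditions, and checking that the four ranges of $\theta^2$ in the statement are exactly the intervals on which the respective degree is optimal (their endpoints being controlled by roots of the relevant Jacobi polynomials). The unfolding step and the Welch- and orthoplex-type cases $m=1,2$ are routine.
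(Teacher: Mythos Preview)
The paper does not prove this lemma at all: it is stated as a known result ``due to Levenstein'' with citations to \cite{Levens,HKchapter}, and no argument is given. So there is no in-paper proof to compare against; your proposal goes well beyond what the paper does.

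That said, your outline is correct and is in fact the standard derivation of these $\theta$-bounds. The unfolding $\cS\mapsto\cS'=\{\lL_\tau\phi_j\}$ producing an $(n,nM,\nu_{\cS'})$ codebook with $\nu_{\cS'}=\theta_\cS$ is exactly how the bounds in \cite{HKchapter} are obtained from Levenstein's inner-product bounds in \cite{Levens}: the four displayed inequalities for $M$ are the first four LP bounds for a size-$nM$ codebook in $\C^n$, divided by $n$. Your verification of distinctness via $\theta<1$ on the stated ranges and the computation $\nu_{\cS'}=\theta$ are both fine; I checked that the $m=1,2$ cases reproduce the first two lines after dividing by $n$, and the same mechanism carries the $m=3,4$ cases once one has the explicit Levenstein polynomials you flag as the remaining work. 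It is also worth noting that your unfolding-under-$\lL_\tau$ is precisely the one-level-down analogue of the paper's own bridge in Theorem~\ref{thm-mmain2} (unfolding under $\mM_w$), which is why applying Lemma~\ref{lem-complex} to $\bar\cS_\fS$ in Theorem~\ref{thm-complex} replaces $M$ by $nM$ in the same fashion.
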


The bounds on unit time-phase signal sets described in the following theorem are derived from the
 bounds of Lemma \ref{lem-complex} and are better.

\begin{theorem}\label{thm-complex}
Let $\fS \subset \hH_n$ be any $(n, M, \lambda)$ unit time-phase signal set. Then
\begin{eqnarray}\label{eqn-baby0}
nM \le \left\{
\begin{array}{ll}
\frac{1-\lambda^2}{1-n\lambda^2}, \ &\mbox{ if }  0 < \lambda^2 \le \frac{1}{n+1}  \\
\frac{(n+1)(1-\lambda^2)}{2-(n+1)\lambda^2}, \ &\mbox{ if }  \frac{1}{n+1} < \lambda^2 \le \frac{2}{n+2}  \\
\frac{n(n+1)(n+2)(1-\lambda^2)^2}{(n+1)(n+2)\lambda^4 -4(n+1)\lambda^2 +2 },
                   \ &\mbox{ if }  \frac{2}{n+2} < \lambda^2 \le \frac{2(n+2)+\sqrt{2(n+1)(n+2)}}{(n+2)(n+3)}  \\
\frac{n(n+1)(n+2)[(n+3)\lambda^2-2](1-\lambda^2)}{12(n+2)\lambda^2 -2(n+2)(n+3)\lambda^4 -12 },
       \ &\mbox{ if }  \frac{2(n+2)+\sqrt{2(n+1)(n+2)}}{(n+2)(n+3)} \le \lambda^2 \le \frac{3(n+3)+\sqrt{3(n+3)(n+1)}}{(n+3)(n+4)}.
\end{array}
\right.
\end{eqnarray}
\end{theorem}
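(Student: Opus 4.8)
The plan is to invoke the one-way bridge of Theorem \ref{thm-mmain2} and then apply the linear programming bounds of Lemma \ref{lem-complex} to the resulting unit time signal set. Concretely, given an $(n, M, \lambda)$ unit time-phase signal set $\fS$ whose parameter $\lambda^2$ falls in one of the four stated ranges, I would first pass to the $(n, nM)$ unit time signal set $\bar{\cS}_{\fS}$ of (\ref{eqn-mainS211}), for which $\theta_{\bar{\cS}_{\fS}} = \lambda$, and then read off the bound by substituting $M \mapsto nM$ and $\theta \mapsto \lambda$ in Lemma \ref{lem-complex}.

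The one point that needs a short preliminary check is the hypothesis $\lambda_\fS < 1$ required by Theorem \ref{thm-mmain2}. In each of the four cases the value $\lambda^2$ is bounded above by $\frac{3(n+3)+\sqrt{3(n+3)(n+1)}}{(n+3)(n+4)}$, and a direct computation shows this quantity is strictly less than $1$ for every integer $n > 1$: the inequality $3(n+3)+\sqrt{3(n+3)(n+1)} < (n+3)(n+4)$ reduces, after isolating the square root and squaring, to $3 < (n+3)(n+1)$, which is clear. Hence in the regime covered by the theorem we always have $\lambda_\fS < 1$, so the bridge of Theorem \ref{thm-mmain2} is available. (The cases in the statement all assume $0 < \lambda^2$, so the degenerate situation $\lambda = 0$ need not be addressed.)

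With this in hand the proof is immediate: by Theorem \ref{thm-mmain2}, $\bar{\cS}_{\fS}$ is an $(n, nM, \theta_{\bar{\cS}_{\fS}})$ unit time signal set with $\theta_{\bar{\cS}_{\fS}} = \lambda$; applying Lemma \ref{lem-complex} to $\bar{\cS}_{\fS}$ turns each condition on $\theta^2$ into the corresponding condition on $\lambda^2$ and each upper bound on $M$ into the stated upper bound on $nM$, which is exactly (\ref{eqn-baby0}). There is no real obstacle here — all the substance is carried by Theorem \ref{thm-mmain2} — the only care needed is the elementary verification above that the stated ranges force $\lambda < 1$. Finally, the claim that (\ref{eqn-baby0}) improves on the naive use of Lemma \ref{lem-complex} (with $M$ in place of $nM$) is then transparent, since the right-hand sides are the same function of $\lambda$ while $nM > M$.
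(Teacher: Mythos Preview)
Your proposal is correct and follows exactly the paper's approach: the paper's proof is the single sentence ``The desired conclusions of this theorem follow from Theorem \ref{thm-mmain2} and Lemma \ref{lem-complex}.'' Your additional verification that the stated ranges force $\lambda<1$ (so that Theorem \ref{thm-mmain2} applies) is a detail the paper leaves implicit, and your computation there is correct.
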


\begin{proof}
The desired conclusions of this theorem follow from Theorem \ref{thm-mmain2} and Lemma \ref{lem-complex}.
\end{proof}

\begin{remark}
The first bound in (\ref{eqn-baby0}) coincides with the bound of (\ref{eqn-welchb2-DFF})
in the case $k=1$. The second bound in (\ref{eqn-baby0}) coincides with the bound of (\ref{eqn-LDFF2}).
So these bounds can be derived with both bridges established in this paper.

\end{remark}

To introduce more bounds on time-phase unit signal sets, let $\hH_{(n,q)}$ denote
the set of all complex-valued functions $f$ on $\Z_n$ such that $\sqrt{n}f(i)$ is a $q$th root
of unity for all $i \in \Z_n$.

The following bounds on unit time signal sets are due to Levenstein \cite{Levens,HKchapter},
and are linear programming bounds.
They work automatically as bounds for unit time-phase signal sets, but are bad ones
because the bounds of Theorem \ref{thm-complexq=2} are much better.

\begin{lemma}\label{lem-complexq=2}
Let $\cS \subset \hH_{(n,q)}$ be any $(n, M, \theta)$ unit time  signal set, where $q=2$. Then
\begin{eqnarray*}
M \le \left\{
\begin{array}{ll}
\frac{1-\theta^2}{1-n\theta^2}, \ &\mbox{ if }  0 \le \theta^2 \le \frac{n-2}{n^2}  \\
\frac{n^2(1-\theta^2)}{3n-2-n^2 \theta^2}, \ &\mbox{ if }  \frac{n-2}{n^2} \le \theta^2 \le \frac{3n-8}{n^2}  \\
\frac{n(1-\theta^2)[(n-2)(n^2-3n+8)-(n^2-n+2)n^2\theta^2  ]}{6n(n-2)-4(3n-4)n^2\theta^2+2n^4\theta^4},  & \mbox{ if }  \frac{3n-8}{n^2} \le \theta^2 \le \frac{3n-10+\sqrt{6n^2-42n+76}}{n^2}  \\
\frac{n^2(1-\theta^2)}{6}
\frac{3n^3-23n^2+90n-136-(n^2-3n+8)n^2\theta^2}{15n^2-50n+24-10(n-2)n^2\theta^2+n^4\theta^4},
& \mbox{ if } \frac{3n-10+\sqrt{6n^2-42n+76}}{n^2}    \le \theta^2
\le \frac{5(n-4)+\sqrt{10n^2-90n+216}}{n^2}.
\end{array}
\right.
\end{eqnarray*}
\end{lemma}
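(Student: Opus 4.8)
The plan is to recognize the four displayed inequalities as instances of Levenstein's universal linear‑programming bounds in the case $q=2$; the shortest route is to invoke \cite{Levens,HKchapter} and check that the parameters match, so what follows is a description of the mechanism one would spell out for a self‑contained argument.

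First I would reduce the statement about a time signal set to one about an ordinary code. Given $\cS\subset \hH_{(n,2)}$ with $\theta_\cS=\theta$ (the displayed ranges of $\theta^2$ force $\theta<1$), form the cyclic‑shift orbit $\cS^{*}=\{\lL_\tau\phi:\phi\in\cS,\ \tau\in\Z_n\}$. A cyclic shift of a $\pm1$‑vector is again a $\pm1$‑vector, so $\cS^{*}\subset\hH_{(n,2)}$; moreover $\langle\lL_\tau\phi,\lL_{\tau'}\varphi\rangle=\langle\phi,\lL_{(\tau'-\tau)\bmod n}\varphi\rangle$, so the definition of $\theta_\cS$ gives $|\langle\psi,\psi'\rangle|\le\theta$ for every pair of distinct $\psi,\psi'\in\cS^{*}$; and since $\theta<1$ (cf.\ Lemmas~\ref{lem-nnew1} and \ref{lem-nnew22}) the $n|\cS|$ shifts are pairwise distinct, so $|\cS^{*}|=n|\cS|$. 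Hence it suffices to bound $|\mathcal{T}|$ for an arbitrary $\mathcal{T}\subset\hH_{(n,2)}$ with maximum crosscorrelation amplitude at most $\theta$: a bound $|\mathcal{T}|\le B(n,\theta)$ yields $M=|\cS|\le B(n,\theta)/n$, and one checks that each displayed right‑hand side is exactly $B(n,\theta)/n$ for the relevant Levenstein bound $B$ (for instance the first case comes from $B(n,\theta)=\frac{n(1-\theta^2)}{1-n\theta^2}$, the Welch bound for codebooks, which does not use the $q=2$ structure).

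Next I would set up the Delsarte--Levenstein linear program. Identifying a unit signal $\frac{1}{\sqrt n}(c_0,\cdots,c_{n-1})$ with $c\in\{\pm1\}^n$, one has $\langle\cdot,\cdot\rangle=1-\frac{2}{n}\,d_H(\cdot,\cdot)$, so $|\langle\psi,\psi'\rangle|\le\theta$ means that all pairwise Hamming distances in $\mathcal{T}$ lie in $[\frac{(1-\theta)n}{2},\frac{(1+\theta)n}{2}]$; equivalently, after identifying antipodal pairs (legitimate since $\theta<1$ forbids antipodal pairs), $\mathcal{T}$ is a code with squared inner products in $[0,\theta^2]$ in the polynomial association scheme obtained from $\{\pm1\}^n$ by this identification (the folded cube). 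Levenstein's bounds $L_m(n,\theta)$, $m=1,2,3,4$, are obtained by substituting into the Delsarte inequality the degree‑$m$ polynomials, built from the Krawtchouk (equivalently Jacobi) polynomials of this scheme, that are nonnegative on $[0,\theta^2]$ and have nonnegative coefficients in the scheme's natural basis; the four displayed cases are precisely these four choices of $m$, and the three surd thresholds (such as $\frac{3n-10+\sqrt{6n^2-42n+76}}{n^2}$) are the ``Levenstein points'' at which $L_{m}$ begins to beat $L_{m-1}$, i.e.\ the relevant roots of the associated adjacent Krawtchouk polynomials.

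The remaining work is bookkeeping: verifying the nonnegativity of the Delsarte coefficients of each of the four test polynomials for the scheme with $q=2$ (not reproved here, but imported from \cite{Levens,HKchapter}), and the elementary algebra that rewrites the general formula $L_m(n,\theta)$ in the stated closed forms and divides by $n$. I expect the main obstacle to be exactly this specialization: getting the normalization of the folded‑cube scheme right --- as opposed to the complex‑sphere scheme underlying Lemma~\ref{lem-complex} --- so that the interval endpoints and the $q=2$‑specific coefficients (the $n^2$'s and $n^4$'s in the denominators) come out precisely as displayed, and confirming that on each $\theta^2$‑subinterval the chosen $L_m$ is indeed the minimum over $m$.
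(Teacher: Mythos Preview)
The paper does not prove Lemma~\ref{lem-complexq=2}: it is quoted as a known result of Levenstein with a bare citation to \cite{Levens,HKchapter}, and no argument is supplied. So there is no in-paper proof for you to match.

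Your sketch is nonetheless the standard derivation found in those references. The reduction step --- replace the sequence family $\cS$ (with $\theta_{\cS}=\theta<1$) by the code $\cS^{*}=\{\lL_\tau\phi:\phi\in\cS,\ \tau\in\Z_n\}\subset\hH_{(n,2)}$, observe that $|\cS^{*}|=nM$ and $\nu_{\cS^{*}}\le\theta$, and then apply the linear-programming bounds for antipodal binary codes --- is exactly how the sequence bounds are obtained from the code bounds in \cite{HKchapter}. The four right-hand sides in the lemma are Levenstein's degree-$1,2,3,4$ bounds for $\{\pm1\}^n$-codes divided by $n$, and your identification of the thresholds as the crossover points is correct. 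The substantive content you defer --- positivity of the Delsarte coefficients for the chosen Krawtchouk test polynomials and the explicit algebra producing the displayed closed forms --- is genuinely where all the work lies, but since you import it from \cite{Levens,HKchapter} just as the paper does, your proposal is adequate as a proof-by-citation.
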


The bounds on unit time-phase signal sets described in the following theorem are derived from the
 bounds of Lemma \ref{lem-complexq=2} and are better.

\begin{theorem}\label{thm-complexq=2}
Let $\fS \subset \hH_{(n,q)}$ be any $(n, M, \lambda)$ unit time-phase signal set, where $q=2$. Then
\begin{eqnarray}\label{eqn-babyq=2}
nM \le \left\{
\begin{array}{ll}
\frac{1-\lambda^2}{1-n\lambda^2}, &\mbox{if }  0 \le \lambda^2 \le \frac{n-2}{n^2}  \\
\frac{n^2(1-\lambda^2)}{3n-2-n^2 \lambda^2}  &\mbox{if }  \frac{n-2}{n^2} \le \lambda^2 \le \frac{3n-8}{n^2}  \\
\frac{n(1-\lambda^2)[(n-2)(n^2-3n+8)-(n^2-n+2)n^2\theta^2  ]}{6n(n-2)-4(3n-4)n^2\theta^2+2n^4\theta^4}  & \mbox{if }  \frac{3n-8}{n^2} \le \lambda^2 \le \frac{3n-10+\sqrt{6n^2-42n+76}}{n^2}  \\
\frac{n^2(1-\lambda^2)}{6}
\frac{3n^3-23n^2+90n-136-(n^2-3n+8)n^2\theta^2}{15n^2-50n+24-10(n-2)n^2\theta^2+n^4\theta^4} 
& \mbox{if } \frac{3n-10+\sqrt{6n^2-42n+76}}{n^2}    \le \lambda^2
\le \frac{5(n-4)+\sqrt{10n^2-90n+216}}{n^2}.
\end{array}
\right.
\end{eqnarray}
\end{theorem}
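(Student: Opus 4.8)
The plan is to follow the template used for Theorem~\ref{thm-complex}. Starting from an $(n,M,\lambda)$ unit time-phase signal set $\fS\subset\hH_{(n,q)}$ with $q=2$, I would first invoke the one-way bridge of Theorem~\ref{thm-mmain2}: the set $\bar\cS_\fS=\{\phi_{j,w}\}$ of~(\ref{eqn-mainS211}) is an $(n,nM,\theta_{\bar\cS_\fS})$ unit time signal set with $\theta_{\bar\cS_\fS}=\lambda$. If $\bar\cS_\fS$ qualifies as an admissible input to Lemma~\ref{lem-complexq=2}, then applying that lemma with its $M$ equal to $nM$ and its $\theta$ equal to $\lambda$ reproduces each of the four ranges of~(\ref{eqn-babyq=2}) verbatim, since every threshold and every right-hand side in Lemma~\ref{lem-complexq=2} is written purely in terms of $n$ and $\theta$. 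So, modulo one admissibility check, the statement would be a one-line consequence of Theorem~\ref{thm-mmain2} and Lemma~\ref{lem-complexq=2}, exactly as Theorem~\ref{thm-complex} follows from Theorem~\ref{thm-mmain2} and Lemma~\ref{lem-complex}.

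The step I expect to be the real obstacle is precisely this admissibility check. The bridge of Section~\ref{sec-2ndgroup} replaces $\phi_j(t)$ by $e^{\frac{2\pi i}{n}wt}\phi_j(t)$, so $\sqrt n\,\phi_{j,w}(t)$ is $\pm 1$ times an $n$th root of unity; hence $\bar\cS_\fS$ in general lies not in $\hH_{(n,2)}$ but only in $\hH_{(n,\lcm(2,n))}$, which is $\hH_{(n,n)}$ for even $n$ and $\hH_{(n,2n)}$ for odd $n$. Unlike the situation of Theorem~\ref{thm-complex}, where the bridge maps $\hH_n$ into $\hH_n$ and there is no alphabet condition to preserve, here the binary alphabet is destroyed, whereas Lemma~\ref{lem-complexq=2} is stated only for $q=2$. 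To close this gap I see three options: (i) verify that the linear-programming argument behind Lemma~\ref{lem-complexq=2} --- which evaluates a suitable positive definite function on the complex unit sphere at the inner-product values that are actually attained --- still yields the same four inequalities when those inner products range over the (a priori larger) set realized by $\bar\cS_\fS$; (ii) for even $n$, invoke the corresponding Levenstein linear-programming bound for $\hH_{(n,n)}$ and check that its first four cases collapse to the formulas in Lemma~\ref{lem-complexq=2} (this would not cover odd $n$); or (iii) replace the bridge by an alphabet-preserving variant, for instance restricting $w$ so that $e^{\frac{2\pi i}{n}wt}\in\{\pm 1\}$ for all $t$, which however only proves a weaker statement (with $nM$ replaced by $2M$). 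Option~(i) is the cleanest if it goes through, and I would attempt it first, tracking exactly which feasibility constraints in the underlying linear program are specific to the binary alphabet.

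Once admissibility is settled, the remainder is purely mechanical: apply the justified bound to the $(n,nM,\lambda)$ unit time signal set $\bar\cS_\fS$ and read off that ``(size)~$\le$~(right-hand side in $\theta$)'' becomes ``$nM\le$~(right-hand side in $\lambda$)'' in each of the four $\lambda^2$-ranges, the thresholds $\frac{n-2}{n^2}$, $\frac{3n-8}{n^2}$, $\frac{3n-10+\sqrt{6n^2-42n+76}}{n^2}$, $\frac{5(n-4)+\sqrt{10n^2-90n+216}}{n^2}$ being carried over unchanged. No further estimates are required; and, as in the remark following Theorem~\ref{thm-complex}, one should observe that the resulting bound improves on what Lemma~\ref{lem-complexq=2} gives when applied directly to $\fS$, since it constrains $nM$ rather than $M$.
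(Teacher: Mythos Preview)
Your approach is exactly the paper's: its entire proof reads ``The desired conclusions of this theorem follow from Theorem~\ref{thm-mmain2} and Lemma~\ref{lem-complexq=2}.'' The alphabet-preservation issue you flag --- that $\bar\cS_\fS$ lies in $\hH_{(n,\lcm(2,n))}$ rather than $\hH_{(n,2)}$, so Lemma~\ref{lem-complexq=2} does not literally apply --- is not addressed in the paper at all; it simply invokes the lemma without comment, so you have in fact been more careful than the source and identified a point the published argument glosses over.
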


\begin{proof}
The desired conclusions of this theorem follow from Theorem \ref{thm-mmain2} and Lemma \ref{lem-complexq=2}.
\end{proof}

\begin{remark}
The second bound in (\ref{eqn-babyq=2}) is better than the second bound in (\ref{eqn-baby0}) when
$n>2$. But the former applies only to binary real time-phase signal sets, while the latter applies to all time-phase signal sets.
\end{remark}

The following bounds on unit time signal sets are due to Levenstein \cite{Levens,HKchapter},
and are linear programming bounds.
They work automatically as bounds for unit time-phase signal sets, but are bad ones
because the bounds of Theorem \ref{thm-complexq>2} are much better.

\begin{lemma}\label{lem-complexq>2}
Let $\cS \subset \hH_{(n,q)}$ be any $(n, M, \theta)$ unit time  signal set, where $q \ge 3$. Then
\begin{eqnarray*}
M \le \left\{
\begin{array}{ll}
\frac{1-\theta^2}{1-n\theta^2}, \ &\mbox{ if }  0 \le \theta^2 \le \frac{n-1}{n^2}  \\
\frac{n^2(1-\theta^2)}{2n-1-n^2 \theta^2}, \ &\mbox{ if }  \frac{n-1}{n^2} \le \theta^2 \le \frac{2n^2-5n+4}{n^2(n-1)}  \\
\frac{n^2(1-\theta^2)[(n^2-n+1)n^2\theta^2 -n^3+3n^2-5n+4
]}{n[4(n-1)n^2\theta^2 - n^4 \theta^4-2n^2+3n]},  &  \mbox{ if }
\frac{2n^2-5n+4}{n^2(n-1)}    \le \theta^2 \le
\frac{2n-2+\sqrt{2n^2-5n+4}}{n^2}.
\end{array}
\right.
\end{eqnarray*}
\end{lemma}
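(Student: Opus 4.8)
The plan is to prove this as a Levenshtein-type linear programming bound, specialized to the finite ``complex sphere'' $\hH_{(n,q)}$ with $q\ge 3$. The underlying object is the set of $q^n$ unit vectors whose coordinates are $q$th roots of unity divided by $\sqrt{n}$; two such vectors $\phi,\varphi$ interact only through the single real parameter $x(\phi,\varphi)=|\langle\phi,\varphi\rangle|^2\in[0,1]$, with $x=1$ exactly on the diagonal. The first task is to set up the associated positive-definite kernel: I would exhibit a sequence of real polynomials $\{F_k(x)\}_{k\ge0}$ with $F_0\equiv 1$ such that, for every finite family $X\subset\hH_{(n,q)}$, the Gram-type matrix $\left[F_k\!\left(x(\phi,\varphi)\right)\right]_{\phi,\varphi\in X}$ is positive semidefinite for each $k$. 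These $F_k$ are the zonal (Jacobi/Krawtchouk-type) polynomials attached to this space; their positive-semidefiniteness is the addition formula for the relevant $Q$-polynomial association scheme, and it is what lets the inner-product data be controlled by a one-variable polynomial inequality.

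Given this kernel, the master inequality is standard. For any polynomial $f(x)=\sum_{k\ge0}f_k F_k(x)$ that is \emph{LP-feasible}, meaning $f_0>0$, $f_k\ge0$ for all $k\ge1$, and $f(x)\le0$ for all $x\in[0,\theta^2]$, I would compute $\sum_{\phi,\varphi\in\cS}f(x(\phi,\varphi))$ in two ways. Expanding in the $F_k$ and using positive-semidefiniteness together with $f_k\ge0$ gives $\sum_{\phi,\varphi}f\ge f_0 M^2$, since only the $k=0$ term ($F_0\equiv 1$) contributes a guaranteed $f_0 M^2$ while each higher term contributes $f_k\,\mathbf{1}^{\top}[F_k]\mathbf{1}\ge0$. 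Splitting instead into diagonal and off-diagonal pairs, and using that every off-diagonal pair has $x(\phi,\varphi)=|\langle\phi,\varphi\rangle|^2\le\theta^2$ (because $\theta_\cS$ dominates the crosscorrelation of distinct signals, the $\tau=0$ case), gives $\sum_{\phi,\varphi}f\le M f(1)$. Combining the two estimates yields $M\le f(1)/f_0$, valid for every feasible $f$.

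The quantitative content, namely the three explicit cases, then comes from choosing Levenshtein's optimal feasible polynomials of increasing degree: a degree-one $f$ for the first range $0\le\theta^2\le(n-1)/n^2$, a degree-two $f$ for the second, and a degree-three $f$ for the third. Each is built from the Jacobi-type polynomials of the space (for instance a linear factor times the square of a lower-degree polynomial) so that the two feasibility conditions, nonnegativity of the $F_k$-coefficients and the sign condition on $[0,\theta^2]$, hold exactly on the stated range. Evaluating $f(1)/f_0$ for each of these polynomials produces the three right-hand sides, and the breakpoints $\frac{2n^2-5n+4}{n^2(n-1)}$ and $\frac{2n-2+\sqrt{2n^2-5n+4}}{n^2}$ are precisely the values of $\theta^2$ at which the largest root of the relevant Jacobi polynomial reaches $\theta^2$, equivalently where consecutive-degree bounds cross and the next degree becomes optimal.

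I expect the main obstacle to be analytic rather than conceptual, and twofold. First, one must pin down the \emph{correct} positive-definite system $\{F_k\}$ for the \emph{finite} $q$-ary space with $q\ge3$: unlike the continuous complex sphere, the admissible values of $|\langle\phi,\varphi\rangle|^2$ form a discrete set and the governing orthogonality measure (hence the $F_k$ and the breakpoint formulas) depends on both $n$ and $q$, which is exactly why $q=2$ and $q\ge3$ yield the different Lemmas \ref{lem-complexq=2} and \ref{lem-complexq>2}. Second, verifying feasibility of the candidate degree-$1,2,3$ polynomials, that is, that all higher coefficients $f_k$ are nonnegative and that $f\le0$ throughout $[0,\theta^2]$, and then showing that the boundary of feasibility falls exactly at the stated breakpoints, requires a careful root analysis of the associated Jacobi polynomials. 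This root analysis is the technical heart of Levenshtein's computation that the statement is quoting.
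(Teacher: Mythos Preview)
The paper does not prove this lemma at all: it is stated as a known result due to Levenstein \cite{Levens,HKchapter} and explicitly labeled a linear programming bound, with no argument supplied. Your outline correctly identifies that method and sketches the standard Delsarte--Levenshtein machinery (positive-definite zonal kernels, the two-sided estimate $f_0M^2\le\sum_{\phi,\varphi}f\le Mf(1)$, and degree-$1,2,3$ optimal feasible polynomials giving the three ranges), so your approach is exactly the one the paper is citing rather than reproducing.
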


The bounds on unit time-phase signal sets described in the following theorem are derived from the
 bounds of Lemma \ref{lem-complexq>2} and are better.

\begin{theorem}\label{thm-complexq>2}
Let $\fS \subset \hH_{(n,q)}$ be any $(n, M, \lambda)$ unit time-phase signal set, where $q \ge 3$. Then
\begin{eqnarray*}
nM \le \left\{
\begin{array}{ll}
\frac{1-\lambda^2}{1-n\lambda^2}, \ &\mbox{ if }  0 \le \lambda^2 \le \frac{n-1}{n^2}  \\
\frac{n^2(1-\lambda^2)}{2n-1-n^2 \lambda^2}, \ &\mbox{ if }  \frac{n-1}{n^2} \le \lambda^2 \le \frac{2n^2-5n+4}{n^2(n-1)}  \\
\frac{n^2(1-\lambda^2)[(n^2-n+1)n^2\theta^2 -n^3+3n^2-5n+4
]}{n[4(n-1)n^2\theta^2 - n^4 \lambda^4-2n^2+3n]},  & \mbox{ if }
\frac{2n^2-5n+4}{n^2(n-1)}    \le \lambda^2 \le
\frac{2n-2+\sqrt{2n^2-5n+4}}{n^2}.
\end{array}
\right.
\end{eqnarray*}
\end{theorem}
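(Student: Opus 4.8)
The plan is to reuse the one-way bridge of Theorem~\ref{thm-mmain2} in exactly the way that Theorem~\ref{thm-complex} and Theorem~\ref{thm-complexq=2} were obtained from Lemma~\ref{lem-complex} and Lemma~\ref{lem-complexq=2}, the only extra care being needed for the root-of-unity condition defining $\hH_{(n,q)}$. First I would note that in each of the three cases of the asserted bound the parameter $\lambda_\fS$ is confined to a subinterval of $[0,1)$: indeed the largest right endpoint occurring is $\frac{2n-2+\sqrt{2n^2-5n+4}}{n^2}$, and a short estimate (equivalently $(n-1)(n^2-3n+3)>0$) shows this is $<1$ for every $n>1$. Hence there is no loss in assuming $\lambda_\fS<1$; if $\lambda_\fS\ge 1$ the hypothesis of the relevant case is vacuous and nothing is to be proved. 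With $\lambda_\fS<1$ in force, Lemma~\ref{lem-nnew1} and Lemma~\ref{lem-nnew22} guarantee that the $nM$ functions $\phi_{j,w}$ defined in (\ref{eqn-mainS22}) are pairwise distinct unit signals.

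Next I would check that $\bar{\cS}_{\fS}$ is again a signal set of the type to which Lemma~\ref{lem-complexq>2} applies. For each $j,w,t$ we have $\sqrt{n}\,\phi_{j,w}(t)=e^{\frac{2\pi i}{n}wt}(\sqrt{n}\,\phi_j(t))$, the product of an $n$th root of unity and a $q$th root of unity, hence an $\lcm(n,q)$th root of unity; since $q\ge 3$ we have $q':=\lcm(n,q)\ge 3$, so $\bar{\cS}_{\fS}\subset \hH_{(n,q')}$ with $q'\ge 3$. (If one prefers to keep $\bar{\cS}_{\fS}$ in $\hH_{(n,q)}$ itself, it suffices to assume $n\mid q$, in which case $e^{\frac{2\pi i}{n}wt}$ is already a $q$th root of unity.) By Theorem~\ref{thm-mmain2}, $\bar{\cS}_{\fS}$ is an $(n,nM,\theta_{\bar{\cS}_{\fS}})$ unit time signal set with $\theta_{\bar{\cS}_{\fS}}=\lambda_\fS$.

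Finally I would apply Lemma~\ref{lem-complexq>2} to $\bar{\cS}_{\fS}$, performing the substitutions $M\mapsto nM$ and $\theta\mapsto\lambda_\fS$. Since the right-hand sides in Lemma~\ref{lem-complexq>2} and the threshold intervals depend only on $n$ and $\theta$ (and not on the particular admissible value of $q\ge 3$), this substitution produces precisely the three inequalities claimed in the theorem. The only genuinely non-mechanical point in the whole argument is the bookkeeping of the root-of-unity order carried out in the second paragraph, ensuring that the passage from $\fS$ to $\bar{\cS}_{\fS}$ stays inside the class $\hH_{(n,q')}$ with $q'\ge 3$; once this is settled, the theorem is an immediate corollary of Theorem~\ref{thm-mmain2} and Lemma~\ref{lem-complexq>2}.
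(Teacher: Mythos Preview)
Your proposal is correct and follows essentially the same route as the paper, which simply cites Theorem~\ref{thm-mmain2} and Lemma~\ref{lem-complexq>2} without further comment. Your additional bookkeeping---verifying that $\bar{\cS}_{\fS}\subset\hH_{(n,q')}$ with $q'=\lcm(n,q)\ge 3$ so that Lemma~\ref{lem-complexq>2} legitimately applies---actually fills a small gap the paper leaves implicit.
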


\begin{proof}
The desired conclusions of this theorem follow from Theorem \ref{thm-mmain2} and Lemma \ref{lem-complexq>2}.
\end{proof}

The following bounds on unit time signal sets are due to Sidelnikov \cite{Sidel}.
They work automatically as bounds for unit time-phase signal sets, but are bad ones
because the bounds of Theorem \ref{thm-Sidel} are much better.

\begin{lemma}\label{lem-Sidel}
Let $\cS \subset \hH_{(n,q)}$ be any $(n, M, \theta)$ unit time signal set. Then
\begin{eqnarray*}
\theta^2 \ge \left\{
\begin{array}{ll}
\frac{(2k+1)(n-k)}{n^2}+\frac{k(k+1)}{2n^2} - \frac{2^k n^{2k}}{M(2k)! {n \choose k}}
     & \mbox{ if } 0 \le k \le \frac{2n}{5} \mbox{ and } q=2 \\
\frac{(k+1)(2n-k)}{2n^2} - \frac{2^k n^{2k}}{M(k!)^2 {n \choose k}}
     & \mbox{ if } k \ge 0 \mbox{ and } q>2.
\end{array}
\right.
\end{eqnarray*}
\end{lemma}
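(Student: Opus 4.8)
The plan is to recognize Lemma~\ref{lem-Sidel} as Sidelnikov's classical lower bound on the periodic auto- and cross-correlation of $q$-phase sequences, so that — exactly as with Lemmas~\ref{lem-Welch}--\ref{lem-complexq>2} — the cleanest route in the paper is to quote it from \cite{Sidel} (see also the textbook account in \cite{HKchapter}) rather than reprove it. For completeness I outline the argument one would carry out to obtain it directly.

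First I would pass from normalized inner products to integer correlations: write $\phi_j(t)=\frac{1}{\sqrt{n}}\,\xi_{j,t}$ with each $\xi_{j,t}$ a $q$th root of unity, so that $c_{j,j',\tau}:=n\langle\phi_j,\lL_\tau\phi_{j'}\rangle=\sum_{t\in\Z_n}\xi_{j,t}\,\overline{\xi_{j',t+\tau}}$ has modulus at most $n$, with equality exactly for the $M$ trivial triples $(j,j',\tau)=(j,j,0)$ and modulus at most $n\theta_\cS$ for all other triples. Then I would fix the degree-$k$ real polynomial $P_k$ from the Sidelnikov/Levenstein family (built from Krawtchouk-type polynomials) arranged to be nonpositive on $[0,(n\theta_\cS)^2]$, and consider the test sum
$$
\Sigma_k=\sum_{j,j',\tau}P_k\!\left(|c_{j,j',\tau}|^{2}\right)=\sum_{i=0}^{k}p_i\sum_{j,j',\tau}|c_{j,j',\tau}|^{2i}.
$$
Because $P_k\le 0$ on the range attained by the nontrivial triples, the easy upper estimate $\Sigma_k\le M\,P_k(n^{2})$ is immediate.

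The substantive half is a matching \emph{lower} bound for $\Sigma_k$, assembled from the individual power moments $\sum_{j,j',\tau}|c_{j,j',\tau}|^{2i}$. Expanding $|c_{j,j',\tau}|^{2i}$ and carrying out the inner sums over $\tau$ and $j'$ reduces each moment to a weighted count of ordered $2i$-tuples of positions in $\Z_n$: a tuple whose root-of-unity factors cancel identically contributes a definite positive amount, while the remaining tuples are controlled (and, in the range considered, negligible). Enumerating the canceling tuples is where the hypothesis on $q$ enters. For $q>2$ a tuple cancels only when its last $i$ positions are a permutation of its first $i$, giving a count governed by $(i!)^2{n\choose i}$ and, after optimization, the coefficient $\frac{(k+1)(2n-k)}{2n^2}$ together with the error term $\frac{2^k n^{2k}}{M(k!)^2{n\choose k}}$ coming from the leading moment $i=k$. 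For $q=2$ each $\xi_{j,t}=\pm1$ is self-conjugate, so positions may also pair among themselves; the larger count is governed by $(2i)!\,{n\choose i}/2^{i}$, the restriction $k\le 2n/5$ is precisely what keeps the neglected lower-order terms harmless, and one picks up the extra summand $\frac{k(k+1)}{2n^2}$ and the error term $\frac{2^k n^{2k}}{M(2k)!{n\choose k}}$. Combining the two estimates for $\Sigma_k$ and solving the resulting linear inequality for $\theta_\cS^2$ delivers the two displayed bounds.

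The step I expect to be the genuine obstacle is the combinatorial bookkeeping in the lower bound — identifying exactly which index-tuples survive the summations over shifts and sequences in each of the two parity regimes, bounding the leftover contributions tightly, and pinning down the admissible range of $k$ so that the constants emerge as stated. Since this is carried out carefully in \cite{Sidel} (and repackaged via linear-programming duality in \cite{HKchapter}), in the paper I would present Lemma~\ref{lem-Sidel} with only a pointer to those references, consistent with the way Lemmas~\ref{lem-Welch}--\ref{lem-complexq>2} are handled.
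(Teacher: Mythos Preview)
Your proposal is correct and matches the paper's own treatment: Lemma~\ref{lem-Sidel} is stated there without proof and simply attributed to Sidelnikov~\cite{Sidel}, exactly in parallel with Lemmas~\ref{lem-Welch}--\ref{lem-complexq>2}. Your additional outline of the moment-method argument goes beyond what the paper does, but the core approach---quote the bound with a pointer to \cite{Sidel} and \cite{HKchapter}---is precisely what the paper adopts.
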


The bounds on unit time-phase signal sets described in the following theorem are derived from the
Sidelnikov bounds of Lemma \ref{lem-Sidel} and are better.

\begin{theorem}\label{thm-Sidel}
Let $\cS \subset \hH_{(n,q)}$ be any $(n, M, \lambda)$ unit time-phase signal set. Then
\begin{eqnarray*}
\lambda^2 \ge \left\{
\begin{array}{ll}
\frac{(2k+1)(n-k)}{n^2}+\frac{k(k+1)}{2n^2} - \frac{2^k n^{2k}}{nM(2k)! {n \choose k}}
     & \mbox{ if } 0 \le k \le \frac{2n}{5} \mbox{ and } q=2 \\
\frac{(k+1)(2n-k)}{2n^2} - \frac{2^k n^{2k}}{nM(k!)^2 {n \choose k}}
     & \mbox{ if } k \ge 0 \mbox{ and } q>2.
\end{array}
\right.
\end{eqnarray*}
\end{theorem}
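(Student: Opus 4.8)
The plan is to prove Theorem~\ref{thm-Sidel} in exactly the same way as Theorems~\ref{thm-complex}, \ref{thm-complexq=2} and \ref{thm-complexq>2}: push the given $(n,M,\lambda)$ unit time-phase signal set through the one-way bridge of Theorem~\ref{thm-mmain2} to obtain a unit time signal set of $n$ times the size and with autocorrelation parameter equal to $\lambda$, and then feed this set into the Sidelnikov bounds of Lemma~\ref{lem-Sidel}. No genuinely new estimate should be required; the content of the argument is the substitution $M\mapsto nM$, $\theta\mapsto\lambda$ in Lemma~\ref{lem-Sidel}.

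In detail, I would first dispose of the trivial case $\lambda_\fS=1$, in which the asserted inequality holds automatically because its right-hand side is a Sidelnikov-type lower bound for $\theta^2$ and hence never exceeds $1$. Assuming $\lambda_\fS<1$, I would form the companion set $\bar\cS_{\fS}=\{\phi_{j,w}\}$ of (\ref{eqn-mainS211})--(\ref{eqn-mainS22}); by Lemma~\ref{lem-nnew22} and Theorem~\ref{thm-mmain2} it is an $(n,nM,\theta_{\bar\cS_{\fS}})$ unit time signal set with $\theta_{\bar\cS_{\fS}}=\lambda_\fS$. Next I would check that $\bar\cS_{\fS}$ still sits in a root-of-unity family: since $\sqrt{n}\,\phi_j(t)$ is a $q$th root of unity, $\sqrt{n}\,\phi_{j,w}(t)=e^{2\pi i wt/n}\sqrt{n}\,\phi_j(t)$ is an $\lcm(n,q)$th root of unity, so $\bar\cS_{\fS}\subset\hH_{(n,q')}$ with $q'=\lcm(n,q)$. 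Finally I would apply Lemma~\ref{lem-Sidel} to $\bar\cS_{\fS}$, in the case $q=2$ for $0\le k\le 2n/5$ and in the case $q>2$ for all $k\ge 0$, and observe that $M$ occurs there only in the single term $\frac{2^{k}n^{2k}}{M(2k)!\,{n \choose k}}$, respectively $\frac{2^{k}n^{2k}}{M(k!)^{2}\,{n \choose k}}$; replacing $M$ by $nM$ and $\theta$ by $\lambda$ turns this into $\frac{2^{k}n^{2k}}{nM(2k)!\,{n \choose k}}$, respectively $\frac{2^{k}n^{2k}}{nM(k!)^{2}\,{n \choose k}}$, which is exactly the displayed bound.

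The step I expect to be the real obstacle is keeping the root-of-unity alphabet consistent with the two cases of Lemma~\ref{lem-Sidel}. For $q>2$ there is nothing to worry about: $q'=\lcm(n,q)>2$, and the second Sidelnikov inequality has the same form for every $q>2$, so it applies verbatim to $\bar\cS_{\fS}$. For $q=2$, however, the phase modulation $\phi_j(t)\mapsto e^{2\pi i wt/n}\phi_j(t)$ destroys the real (binary) structure, so $\bar\cS_{\fS}$ generically lies in $\hH_{(n,q')}$ with $q'=\lcm(n,2)>2$ rather than in $\hH_{(n,2)}$. To obtain the first line of the theorem one must therefore argue that, in the relevant parameter range, the $q=2$ Sidelnikov inequality of Lemma~\ref{lem-Sidel} still applies to $\bar\cS_{\fS}$ (this is immediate when $n=2$, since then $q'=2$), or else supply a binary-structure-preserving bridge for this case. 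Once this bookkeeping is settled, the theorem follows by the substitution described above.
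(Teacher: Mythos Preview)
Your proposal is exactly the paper's approach: the paper's own proof is the single sentence ``the desired conclusions of this theorem follow from Theorem~\ref{thm-mmain2} and Lemma~\ref{lem-Sidel},'' and you have correctly unpacked this as the substitution $M\mapsto nM$, $\theta\mapsto\lambda$ in the Sidelnikov bounds after passing through the bridge $\bar{\cS}_{\fS}$.

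Your worry about the $q=2$ alphabet is well taken and in fact goes beyond what the paper does: the paper's one-line proof does not address whether $\bar{\cS}_{\fS}$ remains in $\hH_{(n,2)}$ after phase modulation, and as you observe it generally does not for $n>2$. So this is not a defect of your proposal relative to the paper; it is a genuine lacuna in the paper's own argument for the first line of the display, which you have identified rather than introduced.
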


\begin{proof}
The desired conclusions of this theorem follow from Theorem \ref{thm-mmain2} and Lemma \ref{lem-Sidel}.
\end{proof}

\section{Constructions of unit time-phase signal sets}\label{sec-constructions}

In this section we present two series of unit time-phase signal
sets which are related to Gaussian sums. We first introduce some
basic facts on Gaussian sums that will be employed in this section. 
For more information
the reader is referred to \cite{BEW}.

Let $\zeta_{n}=e^{\frac{2\pi\sqrt{-1}}{n}} \ (n \geq 2), \ q=p^{m}$
where $m \ge 1$ and $p$ is a prime number. Let $ \mathrm{T}:
\mathbb{F}_{q} \rightarrow \mathbb{F}_{p}$ be the trace mapping. The
group of additive characters of $(\mathbb{F}_{q}, \ +)$ is
$$\widehat{\mathbb{F}}_{q}= \{ \psi_{b}: b \in \mathbb{F}_{q} \},$$
where $\psi_{b}: \mathbb{F}_{q} \rightarrow \langle \zeta_{p}
\rangle$ is defined by
\begin{eqnarray}\label{eqn-AC}
\psi_{b}(x)=\zeta_{p}^{\mathrm{T}(bx)} \ \ (x \in \mathbb{F}_{q}
).
\end{eqnarray}
The identity (trivial character ) is $\psi_{0}=1$ and the inverse
of $\psi_{b}$ is $\psi_{-b}.$

Let $\gamma$ be a primitive element of $\mathbb{F}_{q}$ so that
$$
\mathbb{F}^{\times}_{q}=\mathbb{F}_{q} \backslash
\{0\}=\{1,\gamma,\gamma^{2},\cdots,\gamma^{q-2}\}.
$$
The group of multiplicative characters of $\mathbb{F}_{q}$ is
$$
(\mathbb{F}^{\times}_{q})^{\wedge}=\langle \omega  \rangle
=\{\omega^{i}: 0 \leq i \leq q-2 \},
$$
where $\omega: \mathbb{F}^{\times}_{q} \rightarrow \langle
\zeta_{q-1} \rangle $ is defined by
$$
\omega(\gamma^{j})=\zeta^{j}_{q-1} \ \ (0  \leq j \leq q-2).
$$
The identity (trivial character ) is $\omega^{0}=1$ and the inverse
of $\omega^{i}$ is the conjugate character
$\bar{\omega}^{i}=\omega^{-i}.$

For an additive character $\psi$ and multiplicative character
$\chi$ of $\mathbb{F}_{q},$ the Gauss sum over $\mathbb{F}_{q}$ is
defined by
\begin{eqnarray}\label{eqn-MC}
G(\psi,\chi)=\sum_{ x \in \mathbb{F}^{\times}_{q}
}\psi(x)\chi(x)
\end{eqnarray}
\begin{lemma}\label{lem-Gauss}
Let $\psi$ and $\chi$ be an additive and multiplicative character
of $\mathbb{F}_{q}$ respectively. Then
$$
G(\psi,\chi) =  \left \{
\begin{array}{ll}
q-1, & \mbox{if} \  \psi =1 \ \mbox{and} \ \chi=1; \\
-1, & \mbox{if} \ \psi \neq 1 \ \mbox{and} \ \chi =1; \\ 0, &
\mbox{if} \ \psi =1 \ \mbox{and} \  \chi \neq 1.
\end{array}
\right.
$$

If $ \psi = \psi_{b} \neq 1 $ (namely, $ b \neq 0$) and $\chi
\neq 1,$ then
$$
|G(\psi,\chi)|=\sqrt{q},
$$
and
$$
G(\psi_{b},\chi)=\bar{\chi}(b)G(\chi),
$$
where
$$
G(\chi)=G(\psi_{1},\chi)=\sum_{x \in
\mathbb{F}^{\times}_{q}}\psi_{1}(x)\chi(x)=\sum_{x \in
\mathbb{F}^{\times}_{q}}\chi(x)\zeta^{T(x)}_{p}.
$$ \end{lemma}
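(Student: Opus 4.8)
The plan is to dispose of the three degenerate cases by direct summation, then reduce the general case to the single Gauss sum $G(\chi)$ by a change of variable, and finally compute $|G(\chi)|^{2}$ by the classical double-sum argument. Two orthogonality relations are used throughout: if $\psi$ is a nontrivial additive character of $\mathbb{F}_{q}$ then $\sum_{x \in \mathbb{F}_{q}} \psi(x) = 0$ (pick $a$ with $\psi(a) \neq 1$, note $\psi(a)\sum_{x}\psi(x) = \sum_{x}\psi(x+a) = \sum_{x}\psi(x)$, and conclude the sum vanishes), and if $\chi$ is a nontrivial multiplicative character of $\mathbb{F}_{q}$ then $\sum_{x \in \mathbb{F}^{\times}_{q}} \chi(x) = 0$ by the same trick applied to the multiplicative group.

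First I would settle the three cases in the first displayed formula. If $\psi = 1$ and $\chi = 1$, every summand of (\ref{eqn-MC}) equals $1$ and there are $q-1$ of them, so $G(\psi,\chi) = q-1$. If $\psi \neq 1$ and $\chi = 1$, then $G(\psi,\chi) = \sum_{x \in \mathbb{F}^{\times}_{q}} \psi(x) = \sum_{x \in \mathbb{F}_{q}}\psi(x) - \psi(0) = 0 - 1 = -1$. If $\psi = 1$ and $\chi \neq 1$, then $G(\psi,\chi) = \sum_{x \in \mathbb{F}^{\times}_{q}}\chi(x) = 0$ directly from the multiplicative orthogonality relation.

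Next, for $\psi = \psi_{b}$ with $b \neq 0$ and $\chi \neq 1$, I would write $\psi_{b}(x) = \zeta^{\mathrm{T}(bx)}_{p} = \psi_{1}(bx)$ and substitute $x \mapsto b^{-1}x$ in (\ref{eqn-MC}); since $x \mapsto b^{-1}x$ is a bijection of $\mathbb{F}^{\times}_{q}$ and $\chi(b^{-1}x) = \chi(b^{-1})\chi(x) = \bar{\chi}(b)\chi(x)$, this yields $G(\psi_{b},\chi) = \bar{\chi}(b)\sum_{x \in \mathbb{F}^{\times}_{q}}\psi_{1}(x)\chi(x) = \bar{\chi}(b)G(\chi)$, which is the stated identity. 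Since the values of $\chi$ are roots of unity, $|\bar{\chi}(b)| = 1$, so $|G(\psi_{b},\chi)| = |G(\chi)|$ and it remains only to prove $|G(\chi)| = \sqrt{q}$.

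Finally I would expand $|G(\chi)|^{2} = G(\chi)\overline{G(\chi)} = \sum_{x,y \in \mathbb{F}^{\times}_{q}} \psi_{1}(x-y)\chi(xy^{-1})$ and, for each fixed $y$, substitute $x = yz$ with $z \in \mathbb{F}^{\times}_{q}$ to get $|G(\chi)|^{2} = \sum_{z \in \mathbb{F}^{\times}_{q}} \chi(z) \sum_{y \in \mathbb{F}^{\times}_{q}} \psi_{1}(y(z-1))$. The inner sum equals $q-1$ when $z = 1$ and equals $\sum_{y \in \mathbb{F}^{\times}_{q}}\psi_{1}(y) = -1$ when $z \neq 1$ (because then $y(z-1)$ runs over all of $\mathbb{F}^{\times}_{q}$), hence $|G(\chi)|^{2} = (q-1) - \sum_{z \neq 1}\chi(z) = (q-1) - (0-1) = q$, giving $|G(\chi)| = \sqrt{q}$. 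The only step needing real care is this last one, namely keeping the substitution $x = yz$ straight and isolating the $z = 1$ term; I do not expect a genuine obstacle, as this is the standard evaluation of Gauss sums, but it is the computational heart of the lemma.
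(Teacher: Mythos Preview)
Your proof is correct and is precisely the standard textbook argument for these properties of Gauss sums. Note, however, that the paper does not actually supply a proof of this lemma: it is stated as a background fact, with the reader referred to \cite{BEW} for details. So there is no ``paper's own proof'' to compare against; your write-up simply fills in what the authors left to the reference, and it does so cleanly.
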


The following theorem describes a infinite series of unit time-phase signal sets 
for the case $M=1$. 

\begin{theorem}\label{thm-CONS1}
Let $q=p^{l}, n=q-1, \mathrm{T} : \mathbb{F}_{q} \rightarrow
\mathbb{F}_{p}$ be the trace mapping, $\gamma$ be a primitive
element of $\mathbb{F}_{q}$. Let
$$
\phi=\frac{1}{\sqrt{n}}(\phi(0),\phi(1),\cdots, \phi(n-1)) \in  \C^n
$$ where
$$
\phi(i)=\zeta^{T(\gamma^{i})}_{p} \ \ (0 \leq i \leq n-1).
$$
Then $\fS = \{ \phi \}$ is an $(n,1,\frac{\sqrt{n+1}}{n})$ unit
time-phase signal set. \end{theorem}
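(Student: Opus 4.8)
The plan is to exploit that $M=1$, so the only quantities entering $\lambda_\fS$ are the autocorrelations $\langle \phi, \mM_w \lL_\tau \phi\rangle$ with $(\tau,w)\neq(0,0)$. First I would note that $\|\phi\|=1$ is immediate, since each entry $\phi(i)=\frac{1}{\sqrt n}\zeta_p^{T(\gamma^i)}$ has absolute value $1/\sqrt n$ and there are $n$ of them; hence $\fS$ is a unit signal set. It then remains to evaluate the autocorrelations and find their maximum modulus.

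The key step is to recognize the ambiguity function as a Gauss sum. Writing
\[
\langle \phi, \mM_w \lL_\tau \phi\rangle
 = \frac1n \sum_{t=0}^{n-1} \zeta_p^{\,T(\gamma^t)-T(\gamma^{t+\tau})}\,\zeta_n^{-wt},
\]
with $t+\tau$ read modulo $n=q-1$, I would substitute $x=\gamma^t$, which is a bijection from $\{0,1,\dots,n-1\}$ onto $\mathbb{F}_q^\times$. Since $\gamma^{t+\tau}=\gamma^\tau x$, one has $T(\gamma^t)-T(\gamma^{t+\tau})=T\big(x(1-\gamma^\tau)\big)$, and $\zeta_n^{-wt}=\omega^{-w}(\gamma^t)=\omega^{-w}(x)$ by the definition of the multiplicative character $\omega$. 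Therefore
\[
\langle \phi, \mM_w \lL_\tau \phi\rangle
 = \frac1n\sum_{x\in\mathbb{F}_q^\times}\psi_{1-\gamma^\tau}(x)\,\omega^{-w}(x)
 = \frac1n\,G\big(\psi_{1-\gamma^\tau},\,\omega^{-w}\big).
\]

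Finally I would split into cases according to which of the two characters is trivial and apply Lemma \ref{lem-Gauss}. If $\tau=0$ (so $w\neq0$), then $\psi_{1-\gamma^\tau}=\psi_0=1$ while $\omega^{-w}\neq 1$ because $0<w<q-1$, so the Gauss sum vanishes. If $\tau\neq0$, then $\gamma^\tau\neq 1$ and hence $\psi_{1-\gamma^\tau}\neq 1$: for $w=0$ the Gauss sum equals $-1$, giving modulus $1/n$; for $w\neq 0$ both characters are nontrivial, so the modulus is $\sqrt q/n=\sqrt{n+1}/n$. Taking the maximum over all $(\tau,w)\neq(0,0)$ yields $\lambda_\fS=\frac{\sqrt{n+1}}{n}$, since $\sqrt{n+1}>1$ dominates the other values and the case $\tau\neq0,\ w\neq0$ is nonempty (the standing hypothesis $n>1$ forces $q\geq 3$). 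I do not expect a genuine obstacle here; the only points demanding care are the handling of the trivial-character subcases and checking the index ranges so that $\omega^{-w}\neq 1$ precisely when $w\neq 0$.
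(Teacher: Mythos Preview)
Your proof is correct and follows essentially the same route as the paper: rewrite $\langle \phi,\mM_w\lL_\tau\phi\rangle$ as $\frac{1}{n}G(\psi_{1-\gamma^\tau},\chi^w)$ and then invoke Lemma~\ref{lem-Gauss} case by case. In fact your case values are the right ones---the paper's displayed table has the $\tau=0$ and $w=0$ cases interchanged (giving $1/n$ and $0$ respectively rather than $0$ and $1/n$), though this swap is harmless for the final maximum.
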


\begin{proof}
For $(w,\tau)\neq (0,0), \ 0 \leq w,\tau \leq n-1$
\begin{eqnarray}\label{eqn-con1}
\langle \phi, \mM_{w} \lL_{\tau}(\phi) \rangle
&=&\frac{1}{n}\sum^{n-1}_{i=0}\zeta_{p}^{T(\gamma^{i})}
\bar{\zeta}_{p}^{T(\gamma^{i+\tau})}\bar{\zeta}^{iw}_{n} \nonumber \\
&=&\frac{1}{n}\sum^{n-1}_{i=0}
\zeta_{p}^{T(\gamma^{i}(1-\gamma^{\tau})})\bar{\zeta}_{n}^{iw}
\nonumber
\\ &=& \frac{1}{n}\sum_{x \in
\mathbb{F}^{\times}_{q}}\psi_{b}(x)\chi^{w}(x)=\frac{1}{n}G(\psi_{b},\chi^{w})
\end{eqnarray}
where $b=1-\gamma^{\tau}$ and $\psi_{b}$ is the additive
character of $\mathbb{F}_{q}$ defined by (\ref{eqn-AC}), $\chi$ is
the multiplicative character of $\mathbb{F}^{\times}_{q}$ defined by
$\chi(\gamma)=\bar{\zeta}_{n}$ and $G(\psi_{b},\chi^{w})$ is the
Gauss sum defined by (\ref{eqn-MC}). From Lemma \ref{lem-Gauss} we
have
$$
| \langle \phi, \mM_{w} \lL_{\tau}(\phi) \rangle |= \left \{
\begin{array}{ll}
\frac{1}{n},  & \mbox{if } \ \tau=0 \ (\mbox{so  that}\ b=0) \
\mbox{and} \ w \neq 0,
\\
0, & \mbox{if} \ \tau \neq 0 \ (\mbox{so  that} \ b \neq 0) \ \mbox{and}  \ w=0, \\
\frac{1}{n}|G(\psi_{b},\chi^{w})|=\frac{\sqrt{n+1}}{n},  & \mbox{if
}\ \tau \neq 0 \ \mbox{and} \ w \neq 0.
\end{array}
\right.
$$
Therefore $\lambda=\frac{\sqrt{n+1}}{n}.$
\end{proof}

\begin{remark} 
In the case that $M=1$, the  bound of (\ref{eqn-welchb2-DFF}) is $1/ \sqrt{n+1}.$ 
The existence of an $(n,M,\lambda)=(n,1,1/ \sqrt{n+1})$ unit time-phase signal 
set for any $n \ge 2$ is equivalent to a particular kind of SIC-POVM in quantum 
information theory (\cite{RBSC,Zaun}). But so far only for finitely many  of $n$ such 
a SIC-POVM has been constructed \cite{SGr}. In other words, no infinite series of 
$(n,1,1/ \sqrt{n+1})$ unit time-phase signal sets are known. The infinite series of 
unit time-phase signal sets of Theorem \ref{thm-CONS1} almost meet  the  bound 
of (\ref{eqn-welchb2-DFF}). 

When $q\ge 3$ and $\frac{n-1}{n^2} \le \lambda^2 \le \frac{2n^2-5n+4}{n^2(n-1)}$, 
the second bound of  Theorem \ref{thm-complexq>2} becomes 
$$ 
M \le \left\lfloor \frac{n(1-\lambda^2)}{2n-1-n^2 \lambda^2} \right\rfloor.
$$ 
It is easily verified that the infinite series of unit time-phase signal sets of Theorem 
\ref{thm-CONS1} meet this bound, and are thus optimal. This is the first time that 
an infinite series of optimal time-phase signal sets is constructed. 
\end{remark} 

From now on we assume $M \geq 2.$ In this case it is easy to see
that the lower bound
$w_{2}=(\frac{2Mn-(n+1)}{(n+1)(Mn^2-1)})^{1/4}$ is tighter than
$w_{1}=1/ \sqrt{n+1}$ and the bound
$\sqrt{\frac{2nM-n-1}{(n+1)(nM-1)}}$ is tighter than $w_{2}$ for all
$M \geq 2.$

Now we present a cyclotomic construction of unit time-phase
signal sets for $M \geq 2.$ The construction is a generalization of
Theorem \ref{thm-CONS1}.

\begin{theorem}\label{thm-CONS2}
Let $ \ q=p^{l}, \  q-1=en \ (e\geq 2), \  T: \mathbb{F}_{q}
\rightarrow \mathbb{F}_{p}$ be the trace mapping, $\gamma$ be a
primitive element of $\mathbb{F}_{q}$. For $ 0 \leq i \leq e-1,$ let
$$
\phi_{i}=\frac{1}{\sqrt{n}}(\phi_{i}(0),\phi_{i}(1),\cdots,\phi_{i}(n-1))
\in \mathbb{C} ^{n}
$$ where
$$
\phi_{i}(l)=\zeta^{T(\gamma^{i+le})}_{p} \ \ (0 \leq l \leq n-1).
$$
Then $\fS = \{ \phi_{i}: 0 \leq i \leq e-1 \}$ is an $(n, M,
 \lambda)$ unit time-phase signal set where $n= \frac{q-1}{e},M=e $
and $\lambda \leq \frac{\sqrt{en+1}}{n}.$
\end{theorem}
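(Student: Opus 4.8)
The plan is to compute $\langle \phi_i, \mM_w \lL_\tau \phi_j\rangle$ for all relevant triples $(i,j,w,\tau)$ and show that each such inner product either vanishes, or has absolute value $1/n$, or equals $\frac{1}{n}|G(\psi_b,\chi^w)|=\frac{\sqrt{q}}{n}=\frac{\sqrt{en+1}}{n}$ via Lemma~\ref{lem-Gauss}. This mirrors the proof of Theorem~\ref{thm-CONS1} but now with two distinct indices $i,j$. First I would write
$$
\langle \phi_i, \mM_w \lL_\tau \phi_j\rangle
=\frac{1}{n}\sum_{l=0}^{n-1}\zeta_p^{T(\gamma^{i+le})}\bar\zeta_p^{T(\gamma^{j+(l+\tau)e})}\bar\zeta_n^{lw},
$$
and combine the two $\zeta_p$-terms into $\zeta_p^{T(\gamma^{le}(\gamma^i-\gamma^{j+\tau e}))}=\psi_b(\gamma^{le})$ with $b=\gamma^i-\gamma^{j+\tau e}$. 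The key point is that as $l$ runs over $\Z_n$, the element $x=\gamma^{le}$ runs over the unique subgroup $C_0$ of $\mathbb{F}_q^\times$ of order $n$ (index $e$), so the sum becomes a character sum over $C_0$ rather than over all of $\mathbb{F}_q^\times$.

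To evaluate the restricted sum I would pass to the full group $\mathbb{F}_q^\times$ by writing the indicator of $C_0$ as an average of multiplicative characters: $\mathbf{1}_{C_0}(x)=\frac{1}{e}\sum_{s=0}^{e-1}\rho^s(x)$ where $\rho=\omega^{n}$ generates the character group of $\mathbb{F}_q^\times/C_0\cong \Z_e$. Writing $\bar\zeta_n^{lw}$ as a multiplicative character of $C_0$ evaluated at $x=\gamma^{le}$ — say $\eta^w(x)$ where $\eta$ has order $n$ on $C_0$ — and extending $\eta^w$ to $\mathbb{F}_q^\times$ (fixing one extension $\mu^w$), one obtains
$$
\langle \phi_i, \mM_w \lL_\tau \phi_j\rangle
=\frac{1}{ne}\sum_{s=0}^{e-1}\sum_{x\in \mathbb{F}_q^\times}\psi_b(x)\,\mu^w\rho^s(x)
=\frac{1}{ne}\sum_{s=0}^{e-1}G(\psi_b,\mu^w\rho^s).
$$
Now apply Lemma~\ref{lem-Gauss} term by term. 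If $b=0$ (equivalently $\gamma^i=\gamma^{j+\tau e}$, which forces $i=j$ and $\tau e\equiv 0$, hence $\tau=0$), the nonzero triple forces $w\neq 0$, the characters $\mu^w\rho^s$ are all nontrivial, and each Gauss sum is $-1$ except possibly one; a short bookkeeping gives $|\langle\cdot\rangle|=1/n$ exactly as in Theorem~\ref{thm-CONS1}. If $b\neq 0$, each $|G(\psi_b,\mu^w\rho^s)|\le \sqrt{q}$ with at most one of the $e$ terms possibly being $0$ or $-1$ (when $\mu^w\rho^s=1$, which can happen for at most one $s$ and only if $w\equiv 0$); the triangle inequality then gives $|\langle \phi_i, \mM_w \lL_\tau \phi_j\rangle|\le \frac{1}{ne}\cdot e\sqrt{q}=\frac{\sqrt{q}}{n}=\frac{\sqrt{en+1}}{n}$. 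Taking the maximum over all admissible triples yields $\lambda\le \frac{\sqrt{en+1}}{n}$; in particular, since the bound is $<1$ for $n$ large, Lemma~\ref{lem-nnew1} applies and $M=e$.

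The main obstacle I anticipate is the careful handling of the decomposition of the restricted character sum: correctly identifying the subgroup $C_0=\langle\gamma^e\rangle$, choosing compatible extensions of the order-$n$ character $\eta$ from $C_0$ to $\mathbb{F}_q^\times$, and verifying that at most one summand $G(\psi_b,\mu^w\rho^s)$ fails to have modulus exactly $\sqrt q$ — this is where the inequality $\lambda\le\frac{\sqrt{en+1}}{n}$ (rather than equality) comes from, and one must be sure the triangle-inequality bound is applied to the right number of terms. Everything else — the substitution collapsing the two exponential factors, and the final invocation of Lemma~\ref{lem-Gauss} — is routine. A cleaner alternative that avoids extensions entirely is to leave the sum as $\sum_{x\in C_0}\psi_b(x)\eta^w(x)$ and bound it directly: such a sum over a multiplicative subgroup is a ``sub-sum'' of a Gauss sum and is well known to have modulus at most $\sqrt q$ (it is itself expressible via Gauss sums over $\mathbb{F}_q$ by the averaging identity above), which is all that is needed for the stated inequality.
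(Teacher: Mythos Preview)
Your approach is essentially identical to the paper's: write the inner product as a character sum over the index-$e$ subgroup $C_0=\langle\gamma^e\rangle$, lift to $\mathbb{F}_q^\times$ via the orthogonality identity $\mathbf{1}_{C_0}=\frac1e\sum_{s=0}^{e-1}\chi^{ns}$ (your $\rho^s$, with the paper taking simply $\mu=\chi$ where $\chi(\gamma)=\bar\zeta_{q-1}$), and bound the resulting sum of $e$ Gauss sums by the triangle inequality and Lemma~\ref{lem-Gauss}. One small slip: when $b=0$ (so $\psi_b=1$) and $w\neq 0$, each $\mu^w\rho^s$ is nontrivial and Lemma~\ref{lem-Gauss} gives $G(1,\cdot)=0$, not $-1$, so the inner product vanishes rather than having modulus $1/n$; the paper obtains $0$ in this case, and in any event this does not affect the final bound $\lambda\le\sqrt{en+1}/n$.
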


\begin{proof}
For $0 \leq i,j,w,\tau \leq n-1, \ (i-j,w,\tau) \neq (0,0,0) ,$
\begin{eqnarray}\label{eqn-con2}
\langle \phi_{i}, \mM_w \lL_\tau (\phi_{j}) \rangle &=& \frac{1}{n}
\sum^{n-1}_{l=0} \zeta_{p}^{T(\gamma^{i+le})}
\bar{ \zeta }_{p}^{T(\gamma^{j+(l+\tau )e})}  \bar{ \zeta }_{n} ^{lw} \nonumber \\
&=&\frac{1}{n}\sum^{n-1}_{l=0}
\zeta_{p}^{T(\gamma^{le}(\gamma^{i}-\gamma^{j+\tau
e}))}\bar{\zeta}_{n}^{lw} \nonumber
\\ &=& \frac{1}{n}\sum^{n-1}_{l=0}\zeta^{T(\beta \gamma^{le})}_{p}
\chi^{w}(\gamma^{le})
\end{eqnarray}
where $\chi$ is the multiplicative character of
$\mathbb{F}^{\times}_{q}$ defined by
$\chi(\gamma)=\bar{\zeta}_{q-1}$ and
$\beta=\gamma^{i}-\gamma^{j+\tau e}.$ Since for $\gamma^{t}, \ 0
\leq t \leq q-2,$
$$
\sum^{e-1} _{s=0}\chi^{ns}(\gamma^{t})=\sum^{e-1}
_{s=0}\bar{\zeta}_{e}^{ts} = \left \{
\begin{array}{l}
e, \ \ \ \ \ \ if \  e \mid t ;
\\
0, \ \ \ \ otherwise.
\end{array}
\right.$$
Therefore
\begin{eqnarray}\label{eqn-con3}
\frac{1}{n}\sum^{n-1}_{l=0}\zeta^{T(\beta \gamma^{le})}_{p}
\chi^{w}(\gamma^{le}) &=& \frac{1}{en} \sum_{x
\in\mathbb{F}^{\times}_{q}}\zeta^{T(\beta
 x)}_{p}\chi^{w}(x)\sum^{e-1}_{s=0}\chi^{ns}(x) \nonumber \\
&=& \frac{1}{en}\sum^{ e - 1}_{s = 0}\sum_{x \in
\mathbb{F}^{\times}_{q}}\chi^{ns+w}(x)\zeta^{T(\beta
 x)}_{p}.
\end{eqnarray}

If $(j-i,\tau)=(0,0),$ we have $1 \leq w \leq n-1, \ \beta = \gamma^{i}
- \gamma^{ j + \tau e} =0 $ and $ \chi^{ns + w} \neq 1$ for all $s \
(0 \leq s \leq e-1).$ Therefore by (\ref{eqn-con2}) and
(\ref{eqn-con3}),
$$
\langle \phi_{i},\mM_w \lL_\tau (\phi_{j}) \rangle
=\frac{1}{en}\sum^{e-1}_{s=0}\sum_{x \in
\mathbb{F}^{\times}_{q}}\chi^{ns+w}(x)=\frac{1}{en}\sum^{e-1}_{s=0}0=0.
$$
If $(j-i,\tau) \neq (0,0),$ then $\beta \neq 0$ and the right hand
side of (\ref{eqn-con3}) is
$\frac{1}{en}\sum^{e-1}_{s=0}\bar{\chi}^{ns+w}(\beta)G(\bar{\chi}^{ns+w}).$
Therefore
$$
\langle \phi_{i},  \mM_w \lL_\tau  (\phi_{j}) \rangle \leq
\frac{1}{en}\sum^{e-1}_{s=0}|G(\bar{\chi}^{ns+w})| \leq \frac{e
\sqrt{q} }{en}=\frac{\sqrt{en+1}}{n}.
$$
The upper bound on $\lambda$ then follows.
\end{proof}

\begin{remark}
For the $(n,M)=((q-1)/e, e)$ time-phase signal set $\fS$ in Theorem \ref{thm-CONS2}, the bound of
Theorem \ref{thm-complexq>2} is
$ 
\sqrt{\frac{2ne-e-n}{n^2e-n}}, 
$
which is very close to $\sqrt{\frac{en+1}{n^2}} \ge \lambda_\fS$ when $e$ is mall. 
Note that we were not able to compute the exact value $\lambda_\cS$ for the 
time-phase signal set $\fS$ in Theorem \ref{thm-CONS2}, and thus unable to  
tell if it is optimal with respect to some of the bounds described in this paper. 
\end{remark}

\section{Summary and concluding remarks}\label{sec-summary}

In this paper, we developed a number of bounds on unit time-phase signal sets
which are derived from existing bounds on unit time signal sets. Although the techniques
used in establishing Theorems \ref{thm-mmain}  and \ref{thm-mmain2} are simple,
they are very useful for developing better bounds on unit time-phase signal sets. These two
techniques employ the two one-way bridges with which a unit time-phase signal set
$\cS$ is converted into the two unit time signal sets $\cS_{\cS}$ of
(\ref{eqn-mainS}) and $\bar{\cS}_{\cS}$ of (\ref{eqn-mainS2}). If one sees these
bridges, one would immediately obtain the new bounds on unit time-phase signal sets.
However, without seeing
these bridges, it may be hard to develop bounds on time-phase signal sets even
if one is very familiar with the Welch bound and Levenstein bounds.
Theorems
\ref{thm-mmain}  and \ref{thm-mmain2} are generic and can be employed
to obtain more bounds on unit time-phase signal sets from new bounds on unit
time signal sets. Thus, one of the main contributions of this paper is the discovery
of these two one-way bridges and Theorems
\ref{thm-mmain}  and \ref{thm-mmain2}.

The two one-way bridges described in (\ref{eqn-mainS}) and (\ref{eqn-mainS2}) also
suggest two ways to construct good time signal sets from a good time-phase
signal set. However, it is open how to construct a good time-phase signal set
from a given good time signal set.

It is noticed that time-phase signal sets and time signal sets (also called codebooks)
are very different,
though they all are subsets of $\hH_n$. This is because time-phase signal sets
consider both the time and phase distortion, while time signal sets take care of
only the time distortion. It is much harder to construct good time-phase signal
sets. So far no optimal $(n,M)$ time-phase signal set with $M>1$ is known, while a number of optimal
time signal sets (codebooks) have been constructed in the
literature \cite{CCKS,Ding05,DF,DYin,Kary1,XZG}.

An interesting problem is to construct unit time-phase signal sets meeting or
almost meeting the bounds on unit time-phase signal sets described in this 
paper if this is possible. The infinite series of 
unit time-phase signal sets of Theorem \ref{thm-CONS1} are  optimal. This 
is the first time that an infinite series of optimal time-phase signal sets are 
constructed. The time-phase signal sets of Theorem \ref{thm-CONS2} are 
also very good when $e$ 
is small. The 
constructions of these optimal and almost optimal time-phase signal sets are 
another major contribution of this paper. In general, the bounds on unit 
time-phase signal sets described in this paper should be very good as they 
are derived from the linear programming bounds on unit time signal sets.

Given any $(n, 1, \lambda)$ unit time-phase signal set $\fS$ meeting the bound
of   (\ref{eqn-welchb2-DFF}) (meeting also the bound of (\ref{eqn-LDFF2}) since
$M=1$),  the set  $\cS_{\fS}$ defined in (\ref{eqn-mainS}) is an $(n, n^2, 1/\sqrt{n+1})$
signal set meeting the Welch bound. Such $(n, n^2, 1/\sqrt{n+1})$ signal sets are called
{\em SIC-POVMs} in quantum information \cite{Zaun,Gras,RBSC}. Algebraic and numerical
constructions of such $(n, n^2, 1/\sqrt{n+1})$ signal sets are known for small dimensions
$n$ \cite{Gras,SGr}. It is conjectured that SIC-POVMs exist for every dimension $n$.
However, constructing SIC-POVMs seems to be a very hard problem. Therefore, it
is also a hard problem to construct
$(n, 1, \lambda)$ unit time-phase signal set $\fS$ meeting the bound of
(\ref{eqn-welchb2-DFF}). SIC-POVMs may be used to construct $(n, 1, \lambda)$
unit time-phase signal set $\fS$ meeting the bound of  (\ref{eqn-welchb2-DFF}).
It would be worthy to investigate this. Optimal  $(n, n^2, 1/\sqrt{n+1})$ time signal sets are also related to mutually unbiased
bases, tight frames and line packing in Grassmannian space \cite{CHS,BCM,SH03, DYin,CCKS,Ding05}.

Incoherent systems are also related to time signal sets \cite{HerSto,NT11}. Bounds on
incoherent systems may be employed to derive bounds on time-phase signal sets with
the two bridges established in this paper.

\section*{Acknowledgements}

This work was done while the authors were attending the  Summer School for
Mathematical Foundations of Coding Theory and Cryptography hosted by the
Beijing International Center for Mathematical Research. The authors wish to
thank the Center for its support in many aspects.

C. Ding is supported by the Hong Kong Research Grants Council under Project No. 601311.
K. Feng and R. Feng are supported by the NSFC Grant No. 10990011.
K. Feng is also supported by the Tsinghua National Lab. for
Information Science and Technology.

\end{document}